\newcommand{\sign}{\mathrm{sign}}
\newcommand{\dom}{\mathcal{X}} 
\newcommand{\calP}{\mathcal{P}}
\newcommand{\Real}{\mathbb{R}} 
\newcommand{\vecx}{\bm{x}}
\newcommand{\vecw}{\bm{w}}
\newcommand{\vecz}{\bm{z}}
\newcommand{\veca}{\bm{a}}
\newcommand{\vecb}{\bm{b}}
\newcommand{\vecs}{\bm{s}}
\newcommand{\vecd}{\bm{d}}
\newcommand{\vecxi}{\bm{\xi}}
\newcommand{\idx}{\mathrm{idx}}
\newcommand{\vecbeta}{\bm{\beta}}
\newcommand{\vecc}{\bm{c}}
\newcommand{\matA}{\bm{A}}
\newcommand{\eps}{\varepsilon}
\def\mytt#1{%
{\tt #1}}
\theoremstyle{plain}
\newtheorem{theorem}{Theorem}
\newtheorem{lemma}{Lemma}
\newtheorem{proposition}{Proposition}
\newtheorem{corollary}{Corollary}
\theoremstyle{definition}
\newtheorem{definition}{Definition}
\theoremstyle{remark}
\newcommand{\email}[1]{\texttt{#1}}
\newcommand{\orcid}[2]{
    \href{https://orcid.org/#1}
    {\includegraphics[scale=.06]{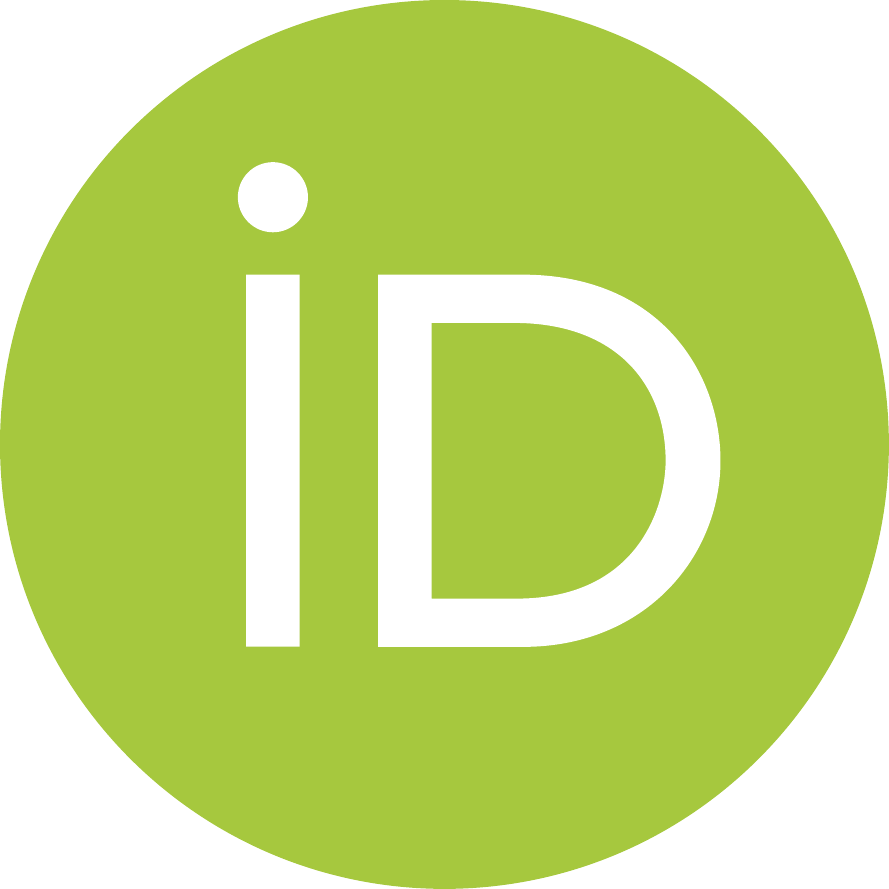} \hspace{1mm} #2}
}
\newcommand{\rootnode}{\mathrm{root}}
\newcommand{\leafnode}{\mathrm{leaf}}
\title{Extended formulations via decision diagrams}
\author{
    \orcid{0000-0000-0000-0000}{Yuta Kurokawa} \\
    Kyushu University\\
    \email{ie19004r@outlook.jp} \\
    \And
    \orcid{0000-0003-2277-6750}{Ryotaro Mitsuboshi} \\
    Kyushu University/RIKEN AIP\\
    \email{ryotaro.mitsuboshi@inf.kyushu-u.ac.jp} \\
    \And
    \orcid{0000-0001-8393-9002}{Haruki Hamasaki} \\
    Kyushu University\\
    \email{hamasaki.haruki.897@s.kyushu-u.ac.jp} \\
    \And
    \orcid{0000-0002-1536-1269}{Kohei Hatano} \\
    Kyushu University/RIKEN AIP\\
    \email{hatano@inf.kyushu-u.ac.jp} \\
    \And
    \orcid{0000-0001-9542-2553}{Eiji Takimoto} \\
    Kyushu University\\
    \email{eiji@inf.kyushu-u.ac.jp} \\
    \And
    \orcid{0000-0000-0000-0000}{Holakou Rahmanian} \\
    Amazon \\
    \email{holakou@amazon.com} \\
}
\begin{document}

\maketitle

\begin{abstract}
    We propose a general algorithm of constructing an extended formulation
    for any given set of linear constraints with integer coefficients.
    Our algorithm consists of two phases: first construct a decision diagram
    $(V,E)$ that somehow represents a given $m \times n$ constraint matrix,
    and then build an equivalent set of $|E|$ linear constraints over
    $n+|V|$ variables.
    That is, the size of the resultant extended formulation depends not
    explicitly on the number $m$ of the original constraints,
    but on its decision diagram representation.
    Therefore, we may significantly reduce the computation
    time and space for optimization problems with integer constraint matrices
    by solving them under the extended formulations, especially when we
    obtain concise decision diagram representations for the matrices.
    Then, we consider the $1$-norm regularized soft margin
    optimization over the binary instance space $\{0,1\}^n$, 
    a standard optimization problem in the machine learning literature. 
    This problem is motivating 
    since the naive application of 
    our extended formulation produces decision diagrams of size $\Omega(m)$. 
    For this problem, 
    we give a modified formulation 
    which works in practice and efficient algorithms 
    whose time complexity depends on the size of the diagrams.
    This problem can be formulated as a linear programming problem with
    $m$ constraints with $\{-1,0,1\}$-valued coefficients over $n$ variables,
    where $m$ is the size of the given sample. 
    We demonstrate the effectiveness of our extended formulations for
    mixed integer programming 
    and the $1$-norm regularized soft margin optimization
    tasks over synthetic and real datasets.

    \keywords{
        Extend formulation \and
        Decision diagrams \and
        Mixed integer programs \and
        Soft margin optimization
    }
\end{abstract}

\section{Introduction}
\label{sec:intro}
Large-scale optimization tasks appear in many areas such as
 machine learning, operations research, and engineering.
Time/memory-efficient optimization techniques are more in demand than ever.
Various approaches have been proposed to efficiently solve
optimization problems over huge data,
e.g., stochastic gradient descent methods (e.g.,\cite{duchi-etal:jmlr11})
and concurrent computing techniques using GPUs
(e.g.,\cite{raina-etal:icml09}).
Among them, we focus on the ``computation on compressed data'' approach,
where we first compress the given data somehow 
and then employ an algorithm that works directly on the compressed data
(i.e., without decompressing the data) to complete the task, in an
attempt to reduce computation time and/or space.
Algorithms on compressed data are mainly studied 
in string processing
(e.g.,~\cite{goto-etal:jda13,hermelin-etal:stacs09,lifshits:cpm07,lohrey:groups12,rytter:icalp04}),
enumeration of combinatorial objects(e.g.,~\cite{minato:ieice17}),
and combinatorial optimization(e.g.,~\cite{bergman-etal:book16}).
In particular, in the work on combinatorial optimization,
they compress the set of feasible solutions that satisfy given constraints
into a decision diagram so that minimizing a linear objective
can be done by finding the shortest path in the decision diagram.
Although we can find the optimal solution very efficiently when
the size of the decision diagram is small, the method can only be applied to
specific types of discrete optimization problems 
where the feasible solution set is finite, 
and the objective function is linear.

Whereas, we mainly consider a more general form of
discrete/continuous optimization problems that include
linear constraints with integer coefficients:
\begin{equation}\label{prob:lin_const_opt}
    \min_{\vecx \in X \subset \Real^n} f(\vecx)
    \quad \text{s.t.} \quad \matA \vecx \geq \vecb
\end{equation}
for some $\matA \in C^{m \times n}$ and $\vecb \in C^m$,
where $X$ denotes the constraints other than $\matA \vecx \geq \vecb$, and 
$C$ is a finite subset of integers. 
This class of problems includes LP, QP, SDP, and MIP 
with linear constraints of integer coefficients.
So our target problem is fairly general.
Without loss of generality, we assume $m > n$, and we are particularly
interested in the case where $m$ is huge.

In this paper, we propose a pre-processing method that "rewrites" 
integer-valued linear constraints with equivalent but more concise ones.
More precisely, we propose a general algorithm that, when given
an integer-valued constraint matrix
$(\matA,\vecb) \in C^{m \times n} \times C^m$
of an optimization problem (\ref{prob:lin_const_opt}),
produces a matrix
$(\matA',\vecb') \in C^{m' \times (n+n')} \times C^{m'}$
that represents its extended formulation, that is, it holds that
\[
	\exists \vecs \in \Real^{n'},
		\matA' \begin{bmatrix}
			\vecx \\ \vecs
		\end{bmatrix}
		\geq \vecb' \Leftrightarrow \matA \vecx \geq \vecb
\]
for some $n'$ and $m'$, with the hope that
the size of $(\matA',\vecb')$ is much smaller than that of
$(\matA,\vecb)$ even at the cost of adding $n'$ extra variables.
Using the extended formulation, we obtain an equivalent
optimization problem to (\ref{prob:lin_const_opt}):
\begin{equation}\label{prob:zdd_const_opt1}
    \min_{\vecx\in X \subset \Real^n,\vecs\in \Real^{n'}} f(\vecx) 
	\quad \text{s.t.} \quad
	\matA' 
    \begin{bmatrix}
        \vecx \\ \vecs
    \end{bmatrix}
     \geq \vecb'.
\end{equation}
Then, we can apply any existing generic solvers, e.g., MIP/QP/LP solvers if
$f$ is linear or quadratic, to (\ref{prob:zdd_const_opt1}),
combined with our pre-processing method, 
which may significantly
reduce the computation time/space than applying them to the original
problem (\ref{prob:lin_const_opt}).

To obtain a matrix $(\matA',\vecb')$, we first construct
a variant of a decision diagram called a
Non-Deterministic Zero-Suppressed Decision Diagram
(NZDD, for short)~\cite{fujita-etal:tcs20} that somehow represents the 
matrix $(\matA,\vecb)$.
Observing that the constraint $\matA \vecz \geq \vecb$ can be
restated in terms of the NZDD constructed as
``every path length is lower bounded by 0''
for an appropriate edge weighting, we establish the extended formulation
$(\matA',\vecb') \in C^{m' \times (n+n')} \times C^{m'}$
with $m' = |E|$ and $n' = |V|$, where
$V$ and $E$ are the sets of vertices and edges of the NZDD, respectively.
One of the advantages of the result is that 
the size of the resulting optimization problem depends only on the size of
the NZDD and the number $n$ of variables, but
\emph{not} on the number $m$ of the constraints in the original problem. 
Therefore, if the matrix $(\matA,\vecb)$ is well compressed into
a small NZDD, then we obtain an equivalent but concise optimization
problem (\ref{prob:zdd_const_opt1}).

To clarify the differences between our work and previous work 
regarding optimization using decision diagrams, 
we summarize the characteristics of both results in Table~\ref{tab:summary}. 
Notable differences are that 
(i) ours can treat optimization problems 
with any types of variables (discrete, or real),
any types of objectives (including linear ones) 
but with integer coefficients on linear constraints, and 
(ii) ours uses decision diagrams for representing linear constraints 
while previous work uses them for 
representing feasible solutions of particular classes of problems.
So, for particular classes of discrete optimization problems, 
the previous approach would work better 
with specific construction methods for decision diagrams. 
On the other hand, 
ours is suitable for continuous optimization problems or/and  
discrete optimization problems for which efficient construction methods 
for decision diagrams representing feasible solutions are not known.
See the later section for more detailed descriptions of related work.

\begin{table}[h]
    \begin{center}
        \caption{%
            Characteristics of previous work on optimization %
            with decision diagrams (DDs) and ours.%
        }
        \label{tab:summary}
        \begin{tabular}{ccccc} \toprule
                          & coeff. of lin. consts. & variables      & objectives & DDs
            \\ \midrule
            Previous work & any type               & binary/integer & linear     & feasible solutions
            \\
            Ours          & binary/integer         & any type       & any type   & lin. consts.
            \\ \bottomrule
        \end{tabular}
    \end{center}
\end{table}

 Among various linear optimization problems, 
 we consider the $1$-norm regularized soft margin optimization
 as a non-trivial application of our method. 
 This problem is a standard optimization problem in the machine learning literature, 
 categorized as LP, 
 for finding sparse linear classifiers 
 given labeled instances. 
 This problem is motivating and challenging in that 
 it has $n+m$ variables and $m$ linear constraints, 
 so the naive application of our method will not be successful 
 as the size of the NZDD representing the constraints is $\Omega(m)$. 
 For this problem, we propose a modified formulation that suffices to work well in practice, 
 and we show efficient algorithms whose time complexity depends only on the size of the NZDD 
 for the modified problem.

Furthermore, to realize succinct extended formulations,
we propose practical heuristics for constructing NZDDs,
which is our third contribution.
Since it is not known to construct an NZDD of small size,
we first construct a ZDD of minimal size, where the ZDD is a restricted
form of the NZDD representation. To this end, we use a ZDD compression
software called \texttt{zcomp}~\cite{toda:ds13}.
Then, we give rewriting rules for NZDDs that reduce both the numbers of vertices
and edges, and apply them to obtain NZDDs of smaller size of $V$ and $E$.
Although the rules may increase the size of NZDDs
(i.e., the total number of edge labels), the rules seem to work
effectively since reducing $|V|$ and $|E|$ is more important
for our purpose.

Experimental results on synthetic and real data sets show that 
our algorithms improve time/space efficiency significantly, especially when 
(i) $m \gg n$, and 
(ii) the set $C$ of integer coefficients is small, 
e.g., binary, where the datasets tend to have concise NZDD representations.

\section{Related work}

Various computational tasks over compressed strings or texts are investigated 
in algorithms and data mining literature, including, e.g., 
pattern matching over strings and computing edit distances or $q$-grams
~\cite{goto-etal:jda13,hermelin-etal:stacs09,lifshits:cpm07,lohrey:groups12,rytter:icalp04}.
The common assumption is that strings are compressed using the straight-line program, 
which is a class of context-free grammars generating only one string (e.g., LZ77 and +LZ78). 
As notable applications of string compression techniques to data mining and machine learning,
Nishino et al.~\cite{nishino-etal:sdm14} and Tabei et al.~\cite{tabei-etal:kdd16} 
reduce the space complexity of matrix-based computations.
So far, however, string compression-based approaches do not seem to be useful 
for representing linear constraints.

Decision diagrams are used in the enumeration of combinatorial objects, discrete optimization and so on.
In short, a decision diagram is a directed acyclic graph with a root and a leaf, 
representing a subset family of some finite ground set $\Sigma$ or, equivalently, a boolean function. 
Each root-to-leaf path represents a set in the set family. 
The Binary Decision Diagram (BDD)\cite{bryant:ieee-tc86,knuth:book11} and 
its variant, the Zero-Suppressed Binary Decision Diagram (ZDD)\cite{knuth:book11,minato:dac93},  
are popular in the literature. 
These support various set operations (such as intersection and union) in efficient ways.
Thanks to the DAG structure, 
linear optimization problems over combinatorial sets $X\subset \{0,1\}^n$ can be reduced to 
shortest/longest path problems over the diagrams representing $X$. 
This reduction is used to solve the exact optimization of NP-hard combinatorial problems
(see, e.g., \cite{bergman-etal:book16,bergman-etal:cpaior11,castro-etal:informs19,inoue-etal:ieee-sg14,morrison-etal:informs16}) 
and enumeration tasks~\cite{minato:ieice17,minato-uno:sdm10,minato-etal:pakdd08}.
Among work on decision diagrams, 
the work of Fujita et al.\cite{fujita-etal:tcs20} would be closest to ours. 
They propose a variant of ZDD called the Non-deterministic ZDD (NZDD) to represent labeled instances and 
show how to emulate the boosting algorithm AdaBoost$^*$\cite{ratsch-warmuth:jmlr05}, 
a variant of AdaBoost\cite{freund-schapire:jcss97} that maximizes the margin, 
over NZDDs. 
We follow their NZDD-based representation of the data. But our work is different from Fujita et al. 
in that, they propose specific algorithms running over NZDDs, 
whereas our work presents extended formulations based on NZDDs, which could be used with various algorithms.

The notion of extended formulation arises in combinatorial optimization 
(e.g., \cite{conforti-etal:4or10,yannakakis:jcss91}). 
The idea is to re-formulate a combinatorial optimization with an equivalent different form, 
so that the size of the problem is reduced. 
For example, 
a typical NP-hard combinatorial optimization problem 
has an integer programming formulation of exponential size.
Then a good extended formulation should have a smaller size than the exponential.
Typical work on extended formulation focuses on some characterization of the problem 
to obtain succinct formulations (see, e.g., \cite{fiorini-etal:mp21}). 
Our work is different from these in that we focus on the redundancy of the data and 
try to obtain succinct extended formulations for optimization problems described with data.

\section{Preliminaries}

The non-deterministic Zero-suppressed Decision Diagram (NZDD)~\cite{fujita-etal:tcs20} 
is a variant of the Zero-suppressed Decision Diagram(ZDD)~\cite{minato:dac93,knuth:book11} , 
representing subsets of some finite ground set $\Sigma$.
More formally, NZDD is defined as follows. 
\begin{definition}[NZDD]
An NZDD $G$ is a tuple $G=(V,E,\Sigma,\Phi)$, 
where $(V,E)$ is a directed acyclic graph 
($V$ and $E$ are the sets of nodes and edges, respectively) 
with a single root with no-incoming edges and a leaf with no outgoing edges, 
$\Sigma$ is the ground set, and 
$\Phi:E \to 2^\Sigma$ is a function assigning each edge $e$ a subset $\Phi(e)$ of $\Sigma$. 
More precisely, we allow $(V,E)$ to be a multigraph, i.e., 
two nodes can be connected with more than one edge.

Furthermore, an NZDD $G$ satisfies the following additional conditions. 
Let $\calP_G$ be the set of paths in $G$ starting from the root to the leaf, 
where each path $P\in \calP_G$ is represented as a subset of $E$, and for any path $P\in \calP_G$, 
we abuse the notation and let $\Phi(P)=\cup_{e\in P}\Phi(e)$.
\end{definition}
\begin{enumerate}
    \item For any path $P\in \calP_G$ and any edges $e, e'\in P$, $\Phi(e) \cap \Phi(e')=\emptyset$. 
    That is, for any path $P$, an element $a \in \Sigma$ appears at most once in $P$. 
    \item For any paths $P, P'\in \calP_G$, $\Phi(P) \neq \Phi(P')$. 
    Thus, each path $P$ represents a different subset of $\Sigma$.
\end{enumerate}
Then, an NZDD $G$ naturally corresponds to a subset family of $\Sigma$. 
Formally, let $L(G)=\{\Phi(P) \mid P \in \calP_G\}$. 
Figure~\ref{fig:NZDD} illustrates an NZDD 
representing a subset family $\{\{a, b, c\},\{b\}, \{b, c, d\}, \{c,d\}\}$.
\begin{figure}[t]
    \begin{center}
        \includegraphics[scale=.7,keepaspectratio]{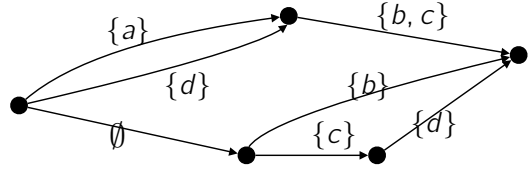}
        \caption{An NZDD representing %
        $\{ \{ a, b, c \}, \{b\}, \{b, c, d\}, \{c, d\}\}$. }
        \label{fig:NZDD}
    \end{center}
    \end{figure}

A ZDD~\cite{minato:dac93,knuth:book11} can be viewed as a special form of NZDD $G=(V,E,\Sigma,\Phi)$ 
satisfying the following properties:
(i) For each edge $e\in E$, $\Phi(e)=\{a\}$ for some $a\in \Sigma$ or $\Phi(e)=\emptyset$.
(ii) Each internal node has at most two outgoing edges. If there are two edges, one is labeled with $\{a\}$ for some $a\in \Sigma$ and the other is labeled with $\emptyset$.
(iii) There is a total order over $\Sigma$ such that, for any path $P\in \calP_G$ and for any $e, e'\in P$ labeled with 
singletons $\{a\}$ and $\{a'\}$ respectively,
if $e$ is an ancestor of $e'$, $a$ precedes $a'$ in the order. 

We believe that that constructing a minimal NZDD 
for a given subset family is
NP-hard since closely related problems are NP-hard. 
For example, 
constructing a minimal ZDD (over all orderings of $\Sigma$) is known to be NP-hard~\cite{knuth:book11}, and 
construction of a minimal NFA which is equivalent to a given DFA is P-space hard~\cite{jiang-ravikumar:sicomp93}.
On the other hand, there is a practical construction algorithm of ZDDs given a subset family and a fixed order over $\Sigma$
using multi-key quicksort~\cite{toda:ds13}.

\section{NZDDs for linear constraints with binary coefficients}

In this section, 
we show an NZDD representation for linear constraints in problem (\ref{prob:lin_const_opt}) 
when linear constraints have $\{0,1\}$-valued coefficients, that is, $C=\{0,1\}$. 
We will discuss its extensions to integer coefficients in the later section.
Let $\veca_i\in \{0,1\}^n$ be the vector corresponding to the $i$-th row of 
the matrix $A\in \{0,1\}^{m\times n}$ (for $i\in [m]$). 
For $\vecx\in \{0,1\}^n$, let $\idx(\vecx)=\{j \in [n] \mid x_j \neq 0\}$, 
i.e., the set of indices of nonzero components of $\vecx$.
Then, we define $I=\{\idx(\vecc_i) \mid \vecc_i =(\veca_i, b_i), i\in[m]\}$. 
Note that $I$ is a subset family of $2^{[n+1]}$. 
Then we assume that we have some NZDD $G=(V,E,[n+1],\Phi)$ representing $I$, that is, $L(G)=I$. 
We will later show how to construct NZDDs.

The following theorem shows the equivalence between 
the original problem (\ref{prob:lin_const_opt}) and a problem described with the NZDD $G$.
\begin{theorem}
\label{theo:main}
    Let $G=(V,E,[n+1],\Phi)$ be an NZDD such that $L(G)=I$. 
    Then the following optimization problem is equivalent to problem (\ref{prob:lin_const_opt}):
    \begin{align}\label{prob:zdd_const_opt}
        \min_{
            \vecx\in X \subset \Real^n,
            \vecs \in \Real^{|V|}
        } & f(\vecx) \\ \nonumber
        \text{s.t.} \quad & s_{e.u} + \sum_{j\in \Phi(e)}x'_j\geq  s_{e.v},
         \quad \forall e \in E,\\ \nonumber
         & s_{\rootnode}=0, ~s_{\leafnode}=0, \\ \nonumber
         & \vecx' = (\vecx, -1),
    \end{align}
    where $e.u$ and $e.v$ are nodes that the edge $e$ is directed from and to, respectively.
\end{theorem}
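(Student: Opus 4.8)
The plan is to show that the two problems share the same objective $f(\vecx)$ and that their feasible regions, projected onto the $\vecx$-variables, coincide; since $\vecs$ does not appear in the objective, this gives equivalence. Throughout, write $\vecx'=(\vecx,-1)\in\Real^{n+1}$ and assign to each edge the weight $w(e)=\sum_{j\in\Phi(e)}x'_j$. The first step is a bookkeeping identity. For any root-to-leaf path $P\in\calP_G$ with $\Phi(P)=\idx(\vecc_i)$, the disjointness condition (Property~1 of the NZDD) gives $\Phi(P)=\bigsqcup_{e\in P}\Phi(e)$, hence $\sum_{e\in P}w(e)=\sum_{j\in\Phi(P)}x'_j$. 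Since $\veca_i\in\{0,1\}^n$ and $b_i\in\{0,1\}$, and $\idx(\vecc_i)$ consists exactly of the indices $j\le n$ with $a_{ij}=1$ together with $n+1$ iff $b_i=1$, this last sum equals $\veca_i\cdot\vecx-b_i$. As $L(G)=I$, the map $P\mapsto\Phi(P)$ ranges over all of $I=\{\idx(\vecc_i)\}$, so the collection $\{\sum_{j\in\Phi(P)}x'_j\}_{P\in\calP_G}$ is exactly $\{\veca_i\cdot\vecx-b_i\}_{i\in[m]}$.

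For the direction from (\ref{prob:zdd_const_opt}) to (\ref{prob:lin_const_opt}), I would fix any feasible $(\vecx,\vecs)$ and sum the edge inequality $s_{e.v}-s_{e.u}\le w(e)$ along an arbitrary path $P\in\calP_G$. The left-hand side telescopes to $s_{\leafnode}-s_{\rootnode}=0$, while the right-hand side equals $\sum_{e\in P}w(e)=\veca_i\cdot\vecx-b_i$ by the identity above; hence $\veca_i\cdot\vecx\ge b_i$ for every $i$, i.e.\ $\matA\vecx\ge\vecb$.

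For the converse, given $\vecx$ with $\matA\vecx\ge\vecb$ (equivalently $w(P)\ge 0$ for every path), I would exhibit an explicit potential. Let $d(v)$ denote the shortest path length from $\rootnode$ to $v$ under the weights $w$, and set $s_v=d(v)$ for $v\ne\leafnode$ and $s_{\leafnode}=0$. Then $s_{\rootnode}=d(\rootnode)=0$ and $s_{\leafnode}=0$ hold by construction. For any edge $e=(u,v)$ with $v\ne\leafnode$, the triangle inequality $d(v)\le d(u)+w(e)$ yields the edge constraint directly. The one delicate case is an edge $e=(u,\leafnode)$ into the leaf, where the constraint reads $0\le d(u)+w(e)$: here $d(u)+w(e)$ is the length of a particular $\rootnode$-to-$\leafnode$ path, and therefore is at least $d(\leafnode)=\min_{P}w(P)\ge 0$. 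Since the leaf has no outgoing edges, overwriting its potential to $0$ breaks no constraint, so $\vecs$ is feasible.

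I expect the main obstacle to be precisely this last point: the naive assignment $s_v=d(v)$ satisfies every edge constraint and pins $s_{\rootnode}=0$, but forces $s_{\leafnode}=d(\leafnode)$, which may be strictly positive, so the shortest-path potential alone cannot impose $s_{\rootnode}=s_{\leafnode}=0$ simultaneously. The resolution is to overwrite only the leaf value, exploiting that the leaf is a sink and that every path into it has nonnegative length. Once both directions are in place, the feasible $\vecx$-sets agree, and equivalence of the two optimization problems follows since $f$ and $X$ are unchanged.
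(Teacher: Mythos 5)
Your proposal is correct and takes essentially the same route as the paper: one direction telescopes the edge inequalities along each root-to-leaf path, and the converse assigns shortest-path potentials computed over the DAG with $s_{\rootnode}=s_{\leafnode}=0$, handling the leaf-incoming edges exactly as the paper does via $\min_{i\in[m]}\veca_i^\top\vecx - b_i \geq 0$ (the paper's $s_{\star,v}=\min_{e:e.v=v} s_{\star,e.u}+\sum_{j\in\Phi(e)}x'_j$ in topological order is precisely your $d(v)$ with the leaf value overwritten). The only cosmetic difference is that you explicitly invoke the disjointness of the $\Phi(e)$ along a path to justify the path-sum identity, a step the paper uses implicitly.
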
 
Before going through the proof, 
let us explain some intuition on problem (\ref{prob:zdd_const_opt}).
Intuitively, each linear constraint in (\ref{prob:lin_const_opt}) is encoded as 
a path from the root to the leaf in the NZDD $G$, 
and a new variable $s_v$ for each node $v$ represents 
a lower bound of the length of the shortest path 
from the root to $v$. The inequalities in (\ref{prob:zdd_const_opt}) 
reflect the structure of the standard dynamic programming of Dijkstra, 
so that all inequalities are satisfied if and only if the length of all paths is larger than zero. 
In Figure~\ref{fig:extended_ex}, we show an illustration of the extended formulation.

\begin{proof}
    Let $\vecx_\star$ and $(\hat{\vecx}',\hat{\vecs})$ be 
    the optimal solutions of problems (\ref{prob:lin_const_opt})
    and (\ref{prob:zdd_const_opt}), respectively.
    It suffices to show that 
    each optimal solution can 
    construct a feasible solution of the other problem. 

    Let $\hat{\vecx}$ be the vector consisting of the first $n$ components of %
    $\hat{\vecx}'$. 
    For each constraint $\veca_i^\top \vecx \geq b_i$ ($i \in [m]$) in problem (\ref{prob:lin_const_opt}),
    there exists the corresponding path $P_i \in \calP_G$. 
    By repeatedly applying the first constraint in (\ref{prob:zdd_const_opt} along the path $P_i$, we have 
    $\sum_{e\in P_i}\sum_{j\in \Phi(e)}\hat{z}_j' \geq \hat{s}_{\leafnode}=0$. 
    Further, since $\Phi(P_i)$ represents the set of indices of nonzero components of $\vecc_i$, 
    $
        \sum_{e\in P_i} \sum_{j\in \Phi(e)} \hat{z}_j'
        = \vecc_i^\top \hat{\vecx}' = \veca_i^\top \hat{\vecx} - b_i
    $. 
    By combining these inequalities, we have $\veca_i^\top \hat{\vecx}-b_i\ge 0$. 
    This implies that $\hat{\vecx}$ is a feasible solution of (\ref{prob:lin_const_opt})  and thus 
    $f(\vecx_\star)\leq f(\hat{\vecx})$.

    Let $\vecx_\star'=(\vecx_\star,-1)$. Assuming a topological order on $V$(from the root to the leaf), 
    we define $s_{\star,\rootnode}=s_{\star,\leafnode}=0$ and $s_{\star,v}=\min_{e\in E, e.v=v}s_{\star,e.u} + \sum_{j\in \Phi(e)}z_{\star,j}'$ for each $v\in V\setminus \{\rootnode, \leafnode\}$.
    Then, we have, for each $e \in E$ s.t. $e.v \neq \leafnode$,  
    $s_{\star,e.v}\leq s_{\star,e.u} + \sum_{j\in \Phi(e)}z_{\star,j}'$ by definition. 
    Now, $\min_{e\in E, e.v=\leafnode}s_{\star,e.u} + \sum_{j\in \Phi(e)}z_{\star,j}'$ is achieved by a path $P\in \calP_G$ corresponding to 
    $\arg\min_{i\in [m]}\veca_i^\top \vecx_\star -b_i$, which is $\geq 0$ since $\vecx_\star$ is feasible w.r.t. 
    (\ref{prob:lin_const_opt}).
    Therefore, $
        s_{\star,e.v} \leq s_{\star,e.u} + \sum_{j\in \Phi(e)} z_{\star,j}'
    $ for $e\in E$ s.t. $e.v=\leafnode$ as well. 
    Thus, $(\vecx_\star', s_\star)$ is a feasible solution of (\ref{prob:zdd_const_opt}) and $f(\hat{\vecx}) \leq f(\vecx_\star)$.
\end{proof}

\begin{figure}[t]
    \begin{center}
    \includegraphics[scale=.42,keepaspectratio]{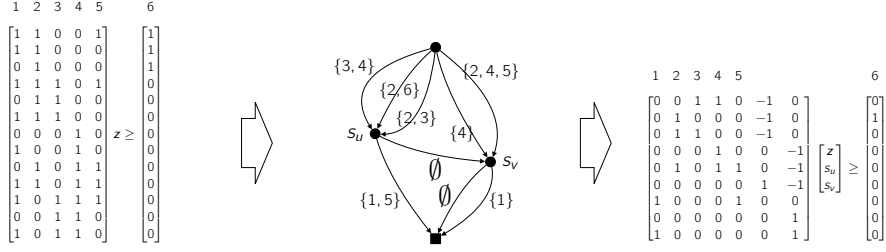}
    \caption{%
        An illustration of the extended formulation. %
        Left: Original constraints as in~(\ref{prob:lin_const_opt}). %
        Middle: A NZDD representation of the left constraints. %
        Right: The matrix form~(\ref{prob:zdd_const_opt}) %
        of the middle diagram without constant terms. %
        This example reduces the $13$ constraints to $9$ constraints %
        by adding $2$ variables. %
    }
    \label{fig:extended_ex}
    \end{center}
\end{figure}


Given the NZDD $G=(V,E)$, problem (\ref{prob:zdd_const_opt}) contains $n+|V|$ variables and $|E|$ linear constraints, 
where $|V|$ variables are real. In particular, if problem (\ref{prob:lin_const_opt}) is LP or IP, 
then problem (\ref{prob:zdd_const_opt}) is LP, or MIP, respectively.

\section{Extensions to integer coefficients}
We briefly discuss how to extend our NZDD representation of linear constraints to the cases 
where coefficients of linear constraints belong to a finite set $C$ of integers.
There are two ways to do so. 
\begin{description}
    \item[Binary encoding of integers] 
    We assume some encoding of integers in $C$ with $O(\log |C|)$ bits. 
    Then, each bit can be viewed as a binary-valued variable. 
    Each integer coefficient can be also recovered with its binary representation. 
    Under this attempt, the resulting extended formulation 
    has $O(n \log |C| + |V|)$ variables and $O(|E|)$ linear constraints. 

    \item[Extending $\Sigma$]
    Another attempt is to extend the domain $\Sigma$ of an NZDD $G=(V, E,\Sigma,\Phi)$. 
    The extended domain $\Sigma'$ consists of all pairs of integers in $C$ and elements in $\Sigma$.
    Again, integer coefficients are recovered through the new domain $\Sigma'$. 
    The resulting extended formulation 
    has $O(n|C| + |V|)$ variables and $O(|E|)$ linear constraints. 
    While the size of the problem is larger than the binary encoding, 
    its implementation is easy in practice and could be effective for $C$ of small size.
\end{description}

\section{$1$-norm regularized soft margin optimization}


The $1$-norm regularized soft margin optimization is a standard linear programming formulation 
of finding a sparse linear classifier with large margin 
(see, e.g., \cite{demiriz-etal:ml02,warmuth-etal:nips07,warmuth-etal:alt08}).
We are given a sequence $S$ of 
labeled instances $S=((\vecx_1,y_1),\dots,(\vecx_m,y_m)) \in (\dom \times \{-1,1\})^m$,
where $\dom \subset \Real^n$ is the set of instances. 
For clarity, we assume that the domain is the set of binary vectors, 
i.e., $\dom=\{0,1\}^n$. 
This is common in many applications
when we employ the bag-of-words representation of instances. 
Given a parameter $\nu \in (0,1]$ and a sequence $S$ of labeled instances, 
the $1$-norm regularized soft margin optimization is defined as follows
\footnote{For the sake of simplicity, we show a restricted version of the formulation. 
We can easily extend the formulation with positive and negative weights
by considering positive and negative weights $w_{j}^+$ and $w_j^-$ instead of $w_j$ ad replacing $w_j$ with $w_j^+ - w_j^-$.}:
\begin{align}\label{prob:softmargin_primal}
    \max_{
        \rho, \vecw, b, \vecxi
    }& \quad \rho 
    - \frac{1}{\nu m}\sum_{i=1}^m \xi_i \\ \nonumber
    \text{s.t.}& \quad y_i(\vecw^\top \vecx_i -b)\geq \rho -\xi_i \quad \forall i=1,\dots,m,\\ \nonumber
    & \quad \sum_{j}w_j + b=1, \vecw \in \Real^n_+, b \geq 0, \vecxi \in \Real^m_+.
\end{align}
For the parameter $\nu\in (0,1]$ and 
the optimal solution $(\rho^\star,\vecw^\star,b^\star,\vecxi^\star)$, 
by a duality argument, 
it can be verified that there are at least $(1-\nu)m$ instances 
that has margin larger than $\rho^\star$, 
i.e., $
    y_i (\vecw^{\star\top}  \vecx_i -b^\star)\geq \rho^\star
$~\cite{demiriz-etal:ml02}.

We formulate a variant of the $1$-norm reguralized soft margin optimization based on NZDDs and 
propose efficient algorithms.
Our generic re-formulation of (\ref{prob:zdd_const_opt}) can be applied 
to the soft margin problem (\ref{prob:softmargin_primal}) as well. 
However, a direct application is not successful since problem (\ref{prob:softmargin_primal}) contains
$O(m)$ slack variables $\vecxi$ and each $\xi_i$ appears only once in $m$ linear constraints. 
That implies a resulting NZDD contains $\Omega(m)$ edges. 
Therefore, we are motivated to formulate a soft margin optimization for which a succinct NZDD representation exists.

Our basic idea is as follows: Suppose that we have some NZDD $G=(V,E,\Phi,\Sigma)$ such that 
each path $P$ corresponds to a constraint $y_i (\vecw^\top \vecx_i +b)\geq \rho$. 
Our idea is to introduce a slack variable $\beta_e$ on each edge $e$ along the path $P$, 
instead of using a slack variable $\xi_i$ for each instance $i\in [m]$.

Given some NZDD $G=(V,E,\Sigma,\Phi)$,
\begin{align}\label{prob:zdd_softmargin_primal}
    \max_{\rho, b, \vecw, \bm{\beta}} \quad & \rho 
    - \frac{1}{\nu m}\sum_{i=1}^m \sum_{e\in P_i}\beta_e \\ \nonumber
    \text{s.t.}\; & y_i(\vecw^\top \vecx_i -b)
    \geq \rho -\sum_{e\in P_i}\beta_e, \; \forall i=1,\dots,m,\\ \nonumber
    & \sum_{j=1}^n w_j + b = 1, \quad
    b \geq 0, \bm{w} \in \Real^n_+, \bm{\beta} \in \Real^E_+.
\end{align}

Note that, the sum of the slack variables $\sum_{e\in P_i}\beta_e$ for each instance $\vecx_i$ 
is more restricted than the original slack variable $\xi_i$. This observation implies the following.

\begin{proposition}
    An optimal solution of problem (\ref{prob:zdd_softmargin_primal}) 
    is a feasible solution of problem (\ref{prob:softmargin_primal}).
\end{proposition}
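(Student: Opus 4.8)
The plan is to exhibit an explicit map sending any feasible solution of the NZDD-based problem~(\ref{prob:zdd_softmargin_primal}) to a feasible solution of the original problem~(\ref{prob:softmargin_primal}), and then specialize it to the optimal solution. The two problems share the variables $\rho$, $b$, and $\vecw$; they differ only in their slack variables, with~(\ref{prob:softmargin_primal}) using one slack $\xi_i$ per instance and~(\ref{prob:zdd_softmargin_primal}) using one slack $\beta_e$ per edge. The natural correspondence is therefore to keep $\rho$, $b$, $\vecw$ unchanged and to aggregate the edge slacks along each path by setting $\xi_i \defeq \sum_{e \in P_i} \beta_e$ for every $i \in [m]$.

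First I would verify the box constraints. Since every $\beta_e \ge 0$ (i.e.\ $\bm{\beta} \in \Real^E_+$), each $\xi_i = \sum_{e \in P_i}\beta_e$ is a sum of nonnegative terms, so $\vecxi \in \Real^m_+$. The remaining constraints $\sum_{j=1}^n w_j + b = 1$, $b \ge 0$, and $\vecw \in \Real^n_+$ are literally identical in both problems and transfer without change. Next I would check the margin constraints: for each $i$, the $i$-th constraint of~(\ref{prob:zdd_softmargin_primal}) reads $y_i(\vecw^\top \vecx_i - b) \ge \rho - \sum_{e \in P_i}\beta_e$, and substituting $\xi_i = \sum_{e\in P_i}\beta_e$ turns it verbatim into $y_i(\vecw^\top \vecx_i - b) \ge \rho - \xi_i$, which is exactly the $i$-th margin constraint of~(\ref{prob:softmargin_primal}). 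Hence the image of any feasible point is feasible; the optimal solution is simply a special case. As a byproduct the two objectives coincide under this map, since $\sum_{i=1}^m \xi_i = \sum_{i=1}^m \sum_{e \in P_i}\beta_e$.

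There is no genuine obstacle here: the statement amounts to the assertion that the substitution $\xi_i = \sum_{e\in P_i}\beta_e$ is feasibility-preserving, and the verification is a one-line reindexing. The only point worth flagging is that this map is generally \emph{not} surjective onto the feasible region of~(\ref{prob:softmargin_primal}): because a single edge $e$ may lie on several paths $P_i$, one variable $\beta_e$ contributes simultaneously to all the corresponding $\xi_i$, so the induced slacks are coupled. This coupling is precisely the ``more restricted'' phenomenon noted before the proposition, and it is what blocks the converse direction (not every feasible $\vecxi$ of~(\ref{prob:softmargin_primal}) arises from some $\bm{\beta}$); but it does not interfere at all with feasibility in the direction we need.
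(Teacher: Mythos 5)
Your proof is correct and matches the paper's reasoning: the paper treats the proposition as an immediate consequence of the observation that the aggregated slack $\sum_{e\in P_i}\beta_e$ plays the role of $\xi_i$, which is exactly the substitution you verify. Your additional remark on non-surjectivity correctly explains why only one direction holds, consistent with the paper's note that the edge slacks are ``more restricted.''
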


Although problem (\ref{prob:zdd_softmargin_primal}) is a restricted version of (\ref{prob:softmargin_primal}),
we observe that this restriction does not decrease the generalization ability in our experiments, which is shown later.

Now we introduce an equivalent formulation of (\ref{prob:zdd_softmargin_primal}) 
which is fully described with an NZDD. 
To do so, we specify how to construct the input NZDD as follows.

\paragraph{NZDD $G$ representing the sample $S$}
Let $I^+=\{\idx((\vecx_i,1))\in 2^{[n+1]}\mid y_i=1, i \in [m]\}$ and 
$I^-=\{\idx((\vecx_i,1))\in 2^{[n+1]} \mid y_i=-1, i\in [m]\}$ be 
the set families of indices of nonzero components of positive and negative instances,
respectively.
For $I^+$ and $I^-$, let 
$G^+ = (V^+, E^+, [n+1], \Phi^+)$
and $G^- = (V^-, E^-, [n+1], \Phi^-)$, be corresponding NZDDs such that $L(G^+)=I^+$
and $L(G^-)=I^-$, respectively.
Finally, we connect $G^+$ and $G^-$ in parallel; 
We denote the resulting NZDD 
as $G = (V, E, [n+1], \Phi)$, 
where $V=V^+ \cup V^-\cup \{\rootnode, \leafnode\}\setminus \{\leafnode^+,\leafnode^-\}$ and 
$E=E^+ \cup E^- \cup \{(\rootnode, \rootnode^+), (\rootnode, \rootnode^-)\}$, 
such that 
(i) two leaf nodes $\leafnode^+, \leafnode^-$ of %
$G^+$ and $G^-$ are merged into a new leaf node $\leafnode$, 
(ii) there are two edges $e$ from a new root node $\rootnode$ %
to root nodes of $G^+$ and $G^-$ with $\Phi(e)=\emptyset$, and 
(iii) for other edges $e\in E$, %
$\Phi(e)=\Phi^+(e)$ if $e \in E^+$ and $\Phi(e)=\Phi^-(e)$ if $e \in E^-$.
\begin{align}\label{prob:zdd_softmargin_primal2}
    \max_{
        \rho,
        \vecw,
        \vecbeta,
        \vecs
    } & \quad \rho 
    - \frac{1}{\nu m}\sum_{e\in E}m_e\beta_e \\ \nonumber
    \text{s.t.} \quad & 
        s_{e.u} + \sign(e)\sum_{j \in \Phi(e)}w_j + \beta_e \geq s_{e.v},
        \quad \forall e \in E, \\ \nonumber
    & s_{\rootnode}=0,~s_{\leafnode}\geq \rho, \\ \nonumber
    & \sum_{j=1}^{n} w_j -w_{n+1}=1, \\ \nonumber
    & w_j \geq 0, \; i \in [n], w_{n+1}\leq 0, \vecbeta \geq \bm{0},
\end{align}
where $m_e$ is the number of paths going thorough the edge $e$, and 
$\sign(e)$ is defined as $\sign(e)=-1$ if $e\in E^-$ and $\sign(e)=1$, otherwise.
The constants $m_e$ can be computed in time $O(|E|)$ a priori by a dynamic programming over $G$.
Note that the bias term $-b$ correspond to $w_{n+1}$ for notatinal convenience. 
Then, by following the same proof argument of Theorem~\ref{theo:main}, we have the following corollary.
\begin{corollary}
    Problem (\ref{prob:zdd_softmargin_primal2}) is equivalent to problem (\ref{prob:zdd_softmargin_primal}) .
\end{corollary}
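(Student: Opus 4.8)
The plan is to mirror the two-way feasibility argument used in the proof of Theorem~\ref{theo:main}, but now tracking objective values as well, since both problems are maximizations and equivalence means their optimal values coincide and optimal solutions transform into one another. First I would record the structural identities that make the correspondence work. Writing $b=-w_{n+1}$, the sign constraint $w_{n+1}\le 0$ encodes $b\ge 0$, and the normalization $\sum_{j=1}^n w_j - w_{n+1}=1$ becomes $\sum_{j=1}^n w_j + b = 1$, matching (\ref{prob:zdd_softmargin_primal}). For each instance $i$, the path $P_i\in\calP_G$ lies entirely in $G^+$ or in $G^-$, and since $\idx((\vecx_i,1))$ contains the bias index $n+1$, the signed path sum $\sum_{e\in P_i}\sign(e)\sum_{j\in\Phi(e)}w_j$ equals $y_i(\vecw^\top\vecx_i - b)$; the two root edges carry $\Phi(e)=\emptyset$ and contribute nothing. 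Finally, swapping the order of summation gives $\sum_{e\in E} m_e\beta_e = \sum_{i=1}^m\sum_{e\in P_i}\beta_e$, so the two objectives agree on any common $(\rho,\vecw,\vecbeta)$.

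For the forward direction, given a feasible $(\rho,\vecw,\vecbeta,\vecs)$ of (\ref{prob:zdd_softmargin_primal2}), I would keep $\rho$, the first $n$ components of $\vecw$, and $\vecbeta$, and set $b=-w_{n+1}$. Telescoping the edge constraints $s_{e.u}+\sign(e)\sum_{j\in\Phi(e)}w_j+\beta_e\ge s_{e.v}$ along $P_i$ yields $\sum_{e\in P_i}\sign(e)\sum_{j\in\Phi(e)}w_j + \sum_{e\in P_i}\beta_e \ge s_{\leafnode}-s_{\rootnode}=s_{\leafnode}\ge\rho$, which by the identity above is exactly $y_i(\vecw^\top\vecx_i-b)\ge\rho-\sum_{e\in P_i}\beta_e$. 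Hence the constructed point is feasible for (\ref{prob:zdd_softmargin_primal}), and by the objective identity it attains the same value.

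For the reverse direction, given a feasible $(\rho,b,\vecw,\vecbeta)$ of (\ref{prob:zdd_softmargin_primal}), I would set $w_{n+1}=-b$ and define $\vecs$ by the shortest-path recursion from Theorem~\ref{theo:main}: $s_{\rootnode}=0$ and, in a topological order of $V$, $s_v=\min_{e\in E,\,e.v=v}\bigl(s_{e.u}+\sign(e)\sum_{j\in\Phi(e)}w_j+\beta_e\bigr)$. By construction every edge constraint holds, so it remains to verify $s_{\leafnode}\ge\rho$. The minimum defining $s_{\leafnode}$ is attained along some root-to-leaf path $P_{i^\star}$, whose total equals $y_{i^\star}(\vecw^\top\vecx_{i^\star}-b)+\sum_{e\in P_{i^\star}}\beta_e\ge\rho$ by feasibility of the original constraint; thus $s_{\leafnode}\ge\rho$ and the point is feasible for (\ref{prob:zdd_softmargin_primal2}) with the same objective value.

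Combining the two directions gives the claimed equivalence. The step I expect to require the most care is the reverse direction's claim that $s_{\leafnode}$ is realized by an honest root-to-leaf path whose total dominates $\rho$: one must argue that the topological recursion actually computes $\min_i\bigl(y_i(\vecw^\top\vecx_i-b)+\sum_{e\in P_i}\beta_e\bigr)$, which relies on the NZDD property that distinct paths correspond to distinct instances and that each path is confined to a single sign block. A secondary subtlety is the multiplicity bookkeeping behind $\sum_{e\in E}m_e\beta_e=\sum_i\sum_{e\in P_i}\beta_e$: if several instances collapse onto the same path, $m_e$ must count instances through $e$ with multiplicity rather than distinct paths, and I would make this counting convention explicit before invoking the objective identity.
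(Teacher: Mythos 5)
Your proposal is correct and matches the paper's intent exactly: the paper gives no separate proof, stating only that the corollary follows ``by the same proof argument of Theorem~\ref{theo:main},'' and your two-way feasibility construction (telescoping edge constraints forward, shortest-path recursion for $\vecs$ backward, with $w_{n+1}=-b$ and the objective identity $\sum_{e\in E} m_e\beta_e=\sum_i\sum_{e\in P_i}\beta_e$) is precisely that argument carried out in detail. Your flagged subtlety about duplicate instances mapping to the same path (so $m_e$ should count instance multiplicity, not distinct paths) is a legitimate refinement of the paper's definition of $m_e$ and worth keeping.
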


Problem (\ref{prob:zdd_softmargin_primal2}) 
has $O(n+|V|+|E|)$ variables and $O(|E|)$ linear constraints, 
whereas the original formulation (\ref{prob:softmargin_primal}) 
has $O(n+m)$ variables and $O(m)$ linear constraints. 
So, with a concise NZDD representation of the sample, we obtain an extended formulation 
whose size is independent of $m$ of the sample size. 
In later subsections, 
we propose two efficient solving methods for problem (\ref{prob:zdd_softmargin_primal2}).

\subsection{Column Generation}
\begin{algorithm}[t]
    \caption{Column Generation}
    \label{alg:cg_zdd}
    \begin{algorithmic}[1]
        \REQUIRE{NZDD $G=(V, E, \Sigma, \Phi)$}.
        \STATE{%
            Let $\vecd_1 \in [0,1]^{|E|}$ be %
            any vector satisfying (\ref{const:d1}), (\ref{const:d2}), %
            and (\ref{const:d3}). Let $J_0=\emptyset$.%
        }
        \FOR{$t=1,2,...$}
            \STATE{%
                Let $
                    j_{t} = \arg \max_{j \in [n+1]}
                    \sign(j) \sum_{e: j \in \Phi(e)} \sign(e) d_e
                $ and let $\hat{\gamma}_{t}$ be its objective value.%
            }
            \STATE{%
                If $\hat{\gamma}_t \leq \gamma_t + \eps$, %
                let $T=t-1$ and break.%
            }
            \STATE{%
                Let $J_{t} = J_{t-1} \cup \{j_t\}$ and %
                update $(\bm{d}_{t+1}, \gamma_{t+1})$ as:
                \begin{alignat}{2}
                    & \min_{\vecd, \gamma} & & \gamma \\ \nonumber
                    & \text{s.t.} \quad & &
                        \sign(j) \sum_{e: j \in \Phi(e)}
                        \sign(e)d_e \leq \gamma,
                        \forall j \in J_t, \\ \nonumber
                    & & & \text{%
                        constraints (\ref{const:d1}), %
                        (\ref{const:d2}), %
                        and (\ref{const:d3})%
                    }.
                \end{alignat}
            }
        \ENDFOR
        \ENSURE{%
            Output the Lagrangian coefficients %
            $(\vecw_T, \vecbeta_T, \rho_T)$ for the subproblem w.r.t. $J_T$.%
        }
    \end{algorithmic}
\end{algorithm}
In this subsection, 
we propose a column generation-based method for solving 
the modified $1$-norm soft margin optimization problem %
(\ref{prob:zdd_softmargin_primal2}). 
Although the size of the extended formulation 
(\ref{prob:zdd_softmargin_primal2}) does not depend on 
the size of linear constraints $m$ of the original problem,
it still depends on $n$, the size of variables. 
The column generation is a standard approach of linear programming 
that tries to reduce either the size of linear constraints/variables 
by solving smaller subproblems.
In our case, we try to avoid problems depending on the size of variables $n$. 

By a standard dual argument of linear programming, 
the equivalent dual problem of (\ref{prob:zdd_softmargin_primal2}) is 
given as follows.
\begin{align}
    \label{prob:zdd_softmargin_dual}
    \min_{\gamma, \vecd} \gamma & \\ \nonumber
    \text{s.t.}
        \quad &
            \sign(j) \sum_{e: j\in\Phi(e)} \sign(e)d_e
            \leq \gamma \quad (j \in [n+1])\\
    \label{const:d1}
        &
            \sum_{e: e.u=u}d_e = \sum_{e: e.v=u} d_e,
            \; \forall u \in V\setminus \{\rootnode, \leafnode\}, \\
    \label{const:d2}
        &
            \sum_{e:e.u=\rootnode}d_e =1,
            \sum_{e:e.v=\leafnode}d_e =1, \\
    \label{const:d3}
        &  0 \leq d_e \leq \frac{m_e}{\nu m} \quad (e \in E),
\end{align}
where $\sign(j)=1$ for $j\in [n]$ and $sign(j)=-1$ for $j=n+1$. 
Here, the dual problem (\ref{prob:zdd_softmargin_primal2}) has 
$|E|+1$ variables and $n$ linear constraints. 
Roughly speaking, this problem is to 
find a vector $\vecd$ that represents 
a ``flow'' from the root to the leaf in the NZDD 
optimizing some objective, where the total flow is $1$. 
The objective is $\gamma$, 
the upper bound of $\sign(j) \sum_{e: j \in \Phi(e)} \sign(e) d_e$ 
for each $j \in[n+1]$.
The column generation-based algorithm is 
given in Algorithm~\ref{alg:cg_zdd}. 
The algorithm repeatedly solves the subproblems 
whose constraints related to $\gamma$ are 
only restricted to a subset $J_t\subseteq [n+1]$.
Then it adds $j_{t+1}$ to $J_t$ (updated as $J_{t+1}$), 
where $j_{t+1}$ corresponds to the constraint 
that violates condition (\ref{const:d1}) the most
with respect to the current solution $(\gamma_{t}, \vecd_{t})$.
It can be shown that the column-generation algorithm finds 
an $\eps$-approximate solution. 
\begin{theorem}
    \label{theo:cg}
    Algorithm \ref{alg:cg_zdd} outputs an $\eps$-approximate solution 
    of (\ref{prob:zdd_softmargin_primal2}).
\end{theorem}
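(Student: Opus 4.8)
The plan is to run the standard column-generation (delayed constraint-generation) analysis on the dual (\ref{prob:zdd_softmargin_dual}), exploiting that the subproblem solved in line~5 is exactly (\ref{prob:zdd_softmargin_dual}) with the family of constraints indexed by $j$ restricted to $J_t$, while the flow constraints (\ref{const:d1})--(\ref{const:d3}) are always retained. Let $\gamma^\star$ denote the common optimal value of the primal (\ref{prob:zdd_softmargin_primal2}) and its dual (\ref{prob:zdd_softmargin_dual}), which coincide by LP strong duality (both feasible regions are nonempty and bounded, so the optima are attained). I would first establish two one-sided bounds that sandwich $\gamma^\star$ at every iteration $t$.

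First I would argue the lower bound $\gamma_t \le \gamma^\star$: since the subproblem over $J_{t-1}$ drops some of the constraints $\sign(j)\sum_{e:j\in\Phi(e)}\sign(e)d_e \le \gamma$ present in the full dual, it is a relaxation of a minimization, so its optimum $\gamma_t$ cannot exceed $\gamma^\star$. Next the upper bound $\gamma^\star \le \hatgamma_t$: the vector $\vecd_t$ returned by the subproblem already satisfies the flow constraints (\ref{const:d1})--(\ref{const:d3}), and by the definition $\hatgamma_t = \max_{j\in[n+1]}\sign(j)\sum_{e:j\in\Phi(e)}\sign(e)d_{t,e}$ the pair $(\hatgamma_t,\vecd_t)$ satisfies \emph{every} constraint of the full dual, hence is dual-feasible, giving $\gamma^\star \le \hatgamma_t$. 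Combining, $\gamma_t \le \gamma^\star \le \hatgamma_t$ at all $t$.

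When the stopping test in line~4 fires, $\hatgamma_t \le \gamma_t + \eps$, so the sandwich collapses to $\gamma^\star - \gamma_t \le \hatgamma_t - \gamma_t \le \eps$. It remains to convert this gap on the dual value into a guarantee for the \emph{primal} object the algorithm actually returns. The subproblem over $J_T$ is itself a finite LP whose own dual is (\ref{prob:zdd_softmargin_primal2}) with the extra constraints $w_j = 0$ for $j\notin J_T$ (dropping a dual $j$-constraint amounts to deleting the corresponding primal variable $w_j$). By strong duality of this finite subproblem, the Lagrangian coefficients $(\vecw_T,\vecbeta_T,\rho_T)$ extracted in line~7 are an optimal solution of this restricted primal with objective value $\gamma_t$; padding with $w_j=0$ for $j\notin J_T$ produces a feasible point of the full primal (\ref{prob:zdd_softmargin_primal2}) with the same objective $\gamma_t \ge \gamma^\star-\eps$, so the returned solution is $\eps$-optimal for the maximization. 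For well-definedness I would also record termination: $J_t$ strictly increases inside $[n+1]$, so after at most $n+1$ rounds $J_t=[n+1]$, forcing $\hatgamma_t=\gamma_t$ and triggering the stopping rule.

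The main obstacle I anticipate is not the sandwich inequality itself but the clean bookkeeping at its two ends: verifying that the pricing step genuinely yields a full-dual-feasible completion $(\hatgamma_t,\vecd_t)$ (so that $\vecd_t$'s flow-feasibility is preserved and only the $\gamma$-bound is raised), and making the primal recovery rigorous, i.e., that strong duality of the subproblem returns exactly the coefficients $(\vecw_T,\vecbeta_T,\rho_T)$ and that zero-padding preserves both full-primal feasibility and the objective value. Handling the index offset $T=t-1$ in line~4 is a minor but necessary piece of care.
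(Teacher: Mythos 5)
Your proposal is correct and takes essentially the same route as the paper's proof: at the moment the stopping test fires, $(\vecd_t, \hatgamma_t)$ is feasible for the full dual (\ref{prob:zdd_softmargin_dual}) because $\hatgamma_t$ is the maximum over \emph{all} $j \in [n+1]$, giving $\gamma_t \geq \hatgamma_t - \eps \geq \gamma^\star - \eps$, and strong LP duality of the subproblem transfers this gap to the returned primal coefficients, exactly as in the paper. Your extra touches---the relaxation bound $\gamma_t \leq \gamma^\star$, the explicit zero-padding argument for primal recovery, and the finite-termination observation that $J_t$ grows strictly within $[n+1]$---are details the paper's terser proof leaves implicit rather than a genuinely different method.
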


\begin{proof}
    Let $(\vecd^\star, \gamma^\star)$ be 
    an optimal solution of (\ref{prob:zdd_softmargin_dual}) 
    and let $\pi^\star$ and $\pi_T$ be 
    the optimum of (\ref{prob:zdd_softmargin_primal2}) and 
    the primal one of the dual subproblem for $T$, respectively. 
    By the duality, we have $\gamma_{T} = \pi_{T}$ 
    and $\gamma^\star = \pi^\star$. 
    We show that $\pi_T \geq \pi^\star - \eps$. 
    By definition of $T$, for any $j \in [n+1]\setminus J_T$, 
    $
        \sign(j) \sum_{e: j \in \Phi(e)} \sign(e) d_e
        \leq \hat{\gamma}_{T+1} \leq \gamma_{T+1} + \eps
    $.
    Then, $(\vecd_{T+1}, \gamma_{T+1} + \eps)$ is 
    a feasible solution of (\ref{prob:zdd_softmargin_dual}).
    So, $
        \sign(j_{T+1}) \sum_{e: j \in \Phi(e)} \sign(e) d_e
        = \hat{\gamma}_{T+1}\geq \gamma^\star
    $.
    Combining these observations, %
    $\pi_{T} = \gamma_{T}\geq \gamma^\star - \eps = \pi^\star - \eps$ %
    as claimed.
\end{proof}
As for its time complexity analysis,
similar to other column generation techniques, 
we do not have non-trivial iteration bounds.
In the next section, 
we propose another algorithm with theoretical guarantee of an iteration bound.

\subsection{Performing ERLPBoost over an NZDD}
We can emulate ERLPBoost~\cite{warmuth-etal:alt08} on an NZDD.
The algorithm is the same as ERLPBoost except for the update rule.
Let $d^0_e = \frac{m_e}{m}$ for all $e \in E$.
In each iteration $t$, the compressed version of ERLPBoost 
solves the following sub-problem:
\begin{align}
    \label{eq:compressed_erlpboost_subproblem}
    \min_{\gamma, \boldsymbol d} \quad
    &
    \gamma
    + \frac 1 \eta \left[
            \sum_{e \in E} d_e \ln \frac{d_e}{d^0_e} - d_e + d^0_e
        \right] \\
    \text{s.t.} \quad & 
    \sign(j) \sum_{e: j \in \Phi(e)} \sign(e) d_e \leq \gamma,
    \quad \forall j \in J_t, \nonumber \\
    & \text{constraints~(\ref{const:d1}), (\ref{const:d2}), and~(\ref{const:d3})}.
    \nonumber
\end{align}
Here, $\eta > 0$ is some parameter.
One can rewrite~(\ref{eq:compressed_erlpboost_subproblem})
in terms of $\bm{d}$ by introducing $\max$ function. 
We denote the resulting objective function as $P^{t}(\bm{d})$.
Algorithm~\ref{alg:compressed_erlpboost} shows ERLPBoost over an NZDD.
\begin{algorithm}[t]
    \caption{ERLPBoost over an NZDD}
    \label{alg:compressed_erlpboost}
    \begin{algorithmic}[1]
        \REQUIRE{NZDD $G=(V,E,\Sigma,\Phi)$}
        \STATE{%
            Set $d^0_e = \frac{m_e}{m}$ for all $e \in E$. %
            Let $J_0=\emptyset$.%
        }
        \FOR{$t=1,2, \dots$}
            \STATE{%
                \label{alg-line:weak-learnability}
                Find a hypothesis $j_t \in [n+1]$ %
                that maximizes the edge w.r.t. $\bm{d}^{t-1}$.%
            }
            \STATE{%
                Set $
                    \delta^t :=
                    \min_{1 \leq q \leq t}
                    P^q(\bm{d}^{q-1}) - P^{t-1}(\bm{d}^{t-1})
                $.%
            }
            \STATE{If $\delta^t \leq \varepsilon / 2$, Set $T=t-1$ and break.}
            \STATE{%
                Compute the minimizer $\bm d^t$ of %
                (\ref{eq:compressed_erlpboost_subproblem}).%
            }
        \ENDFOR
        \STATE{%
            Solve (\ref{prob:zdd_softmargin_dual}) over %
            $J_T = \{j_t\}_{t=1}^T$ %
            to get the optimal weights $\bm w^T$ on hypotheses.%
        }
        \ENSURE{Output $f = \sum_{j \in J_T} w^T_j h_j$.}
    \end{algorithmic}
\end{algorithm}
We can also obtain a similar iteration bound like ERLPBoost.
\begin{theorem}
    \label{thm:convergence_guarantee_of_compressed_erlpboost}
    If $
        \eta = \frac{4}{\varepsilon} 
        \mathop{\rm depth}(G) \max\{1, \ln \frac 1 \nu\}
    $, then Algorithm~\ref{alg:compressed_erlpboost} 
    finds an $\varepsilon$-approximate solution 
    to~(\ref{prob:zdd_softmargin_dual}) in
    $
        T \leq \frac{144}{\varepsilon^2}
        \mathop{\rm depth}(G)^2 \max\left(1, \ln \frac 1 \nu \right)
    $
    iterations.
\end{theorem}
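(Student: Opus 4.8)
The plan is to adapt the convergence analysis of ERLPBoost~\cite{warmuth-etal:alt08} to the NZDD/flow setting. Problem~(\ref{prob:zdd_softmargin_dual}) is the soft-margin LP written as a flow problem over $G$, and the subproblem~(\ref{eq:compressed_erlpboost_subproblem}) is its entropy-regularized restriction to the active column set $J_t$; Algorithm~\ref{alg:compressed_erlpboost} is therefore column generation on this regularized problem. As in the original analysis, two ingredients are needed: (i) a bound on the \emph{range} of the relative-entropy regularizer, which fixes the trade-off parameter $\eta$ and controls how far the regularized optimum sits from the true optimum of~(\ref{prob:zdd_softmargin_dual}); and (ii) a per-iteration \emph{progress} bound that converts the stopping gap $\delta^t$ into an iteration count. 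The nonstandard part is that both quantities acquire factors of $\mathop{\rm depth}(G)$ because $\bm d$ ranges over a flow polytope rather than a probability simplex.

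First I would bound the regularizer. Let $R$ denote the maximum over the feasible set~(\ref{const:d1})--(\ref{const:d3}) of the relative-entropy term $\sum_{e\in E}\left[d_e\ln(d_e/d^0_e)-d_e+d^0_e\right]$ appearing in~(\ref{eq:compressed_erlpboost_subproblem}). Every feasible $\bm d$ is a unit-value flow, so by path decomposition $\sum_{e}d_e=\sum_{P}\lambda_P|P|\le \mathop{\rm depth}(G)$, and the same bound holds for $\bm d^0$; moreover the capacity $d_e\le m_e/(\nu m)$ together with $d^0_e=m_e/m$ gives $d_e/d^0_e\le 1/\nu$. Combining, $R\le \mathop{\rm depth}(G)\ln(1/\nu)+\mathop{\rm depth}(G)\le 2\mathop{\rm depth}(G)\max\{1,\ln(1/\nu)\}$. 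Setting $\eta=2R/\eps$---which is exactly the value stated in the theorem---makes the penalty $\tfrac1\eta R\le \eps/2$, so that an $\eps/2$-accurate solution of the regularized problem is $\eps$-accurate for~(\ref{prob:zdd_softmargin_dual}).

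Next I would establish progress. Following ERLPBoost, $P^t(\bm d^t)$ is nondecreasing in $t$ (adding a column only tightens the $\max$), bounded, and the stopping gap $\delta^t$ upper-bounds the residual optimality. The key estimate is that while $\delta^t>\eps/2$, one column-generation step increases $P^t(\bm d^t)$ by an amount controlled by the curvature of the regularized objective. Here the second $\mathop{\rm depth}(G)$ enters: because a feasible flow has $\ell_1$-mass at most $\mathop{\rm depth}(G)$, the generalized Pinsker inequality only gives that the relative-entropy term is $\tfrac{1}{\mathop{\rm depth}(G)}$-strongly convex in $\|\cdot\|_1$, so $\tfrac1\eta$ times it is $\tfrac{1}{\eta\mathop{\rm depth}(G)}$-strongly convex, and dually the smoothness scales like $\eta\mathop{\rm depth}(G)$. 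Since each hypothesis column takes values in $[-1,1]$ (on any path an index appears at most once, so the flow through edges carrying index $j$ is at most the unit total flow), the greedy step on the smooth regularized dual yields the standard $O(1/t)$ rate, i.e.\ $T=O(\eta\mathop{\rm depth}(G)/\eps)$. Substituting $\eta=4\mathop{\rm depth}(G)\max\{1,\ln(1/\nu)\}/\eps$ gives $T\le \tfrac{144}{\eps^2}\mathop{\rm depth}(G)^2\max\{1,\ln(1/\nu)\}$ after bookkeeping the constants.

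I expect the main obstacle to be exactly this curvature step over the flow polytope: the classical ERLPBoost argument is carried out on the capped simplex, where masses sum to one and the Pinsker constant is a clean $1$, whereas here the unit flow spreads over up to $\mathop{\rm depth}(G)$ edges, and one must verify the generalized Pinsker inequality and the corresponding smoothness/progress bound in this unnormalized geometry. This is where the second depth factor is born, and matching the stated constant $144$ requires tracking the factors through both the regularization-error bound and the progress bound simultaneously; the remaining steps are a direct transcription of the ERLPBoost template with $\ln(N/\nu)$ replaced by $\mathop{\rm depth}(G)\max\{1,\ln(1/\nu)\}$.
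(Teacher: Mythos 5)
Your proposal is correct in outline and follows the same two-ingredient ERLPBoost template as the paper's proof, but both ingredients are executed by different means, so a comparison is worthwhile. For the entropy-range bound, you combine the flow-mass bound $\sum_{e} d_e \le \mathop{\rm depth}(G)$ (via path decomposition of a unit flow in a DAG) with the capacity ratio $d_e/d^0_e \le (m_e/\nu m)/(m_e/m) = 1/\nu$, giving $R \le \mathop{\rm depth}(G)\left(1+\ln\frac{1}{\nu}\right) \le 2\mathop{\rm depth}(G)\max\{1,\ln\frac{1}{\nu}\}$; the paper's Lemma~\ref{lem:relative_entropy_bound} instead layers the NZDD and duplicates each edge $m_e$ times so that each layer looks like a capped simplex, obtaining the slightly tighter bound $\mathop{\rm depth}(G)\ln\frac{1}{\nu}$. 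Your computation is shorter and suffices for the stated $\eta = 2R/\eps$, so this difference is cosmetic. For the per-iteration progress bound, you invoke strong convexity of the unnormalized entropy over the flow polytope (modulus $\Theta(1/\mathop{\rm depth}(G))$ w.r.t.\ $\ell_1$, via a generalized Pinsker inequality) together with dual smoothness and a generic greedy $O(1/t)$ rate; the paper instead works explicitly in the dual~(\ref{eq:compressed_erlpboost_dual}): it moves only the $w$-block along $\bm{w}^t(\alpha)$ while freezing $(\bm{s}^{t-1},\bm{\beta}^{t-1})$, uses the KKT identity~(\ref{eq:compressed_erlpboost_kkt}) and Jensen's inequality with $x^t_e \in [-2,2]$, optimizes $\alpha$ in closed form, and derives the explicit quadratic recursion $\delta^{t-1}-\delta^t \ge (\delta^t)^2/(18\eta\,\mathop{\rm depth}(G))$ (Lemma~\ref{lem:erlpboost_lemma3}), which the recursion lemma of Abe et al.\ (Lemma~\ref{lem:erlpboost_recursion}) converts into the iteration count; this hands-on route is precisely what pins down the constant $144$, which your smoothness argument correctly flags as remaining bookkeeping. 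Two further points: the paper actually proves the stronger Theorem~\ref{thm:convergence_guarantee_of_compressed_erlpboost_strong}, valid under a weak-learner guarantee $g$, and obtains the stated theorem as the exact-maximizer case; and your one-line claim that ``the stopping gap upper-bounds the residual optimality'' is made precise by Lemma~\ref{lem:erlp_accuracy_guarantee}, which uses the regularizer on both sides --- nonnegativity of the entropy to get $g \le \min_{t} P^t(\bm{d}^{t-1})$, and the range bound to get $P^T(\bm{d}^T) \le P^T_{\mathrm{LP}} + \eps/2$ --- a step your sketch should spell out, though it fits your stated plan without obstruction.
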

Here, we prove Theorem~\ref{thm:convergence_guarantee_of_compressed_erlpboost}
with weaker assumption; we assume that the weak learner returns a hypothesis
$j_{t} \in \{1, 2, \dots, n\}$ satisfying
\begin{align}
    \label{eq:weak_learner_guarantee}
    \sign(j_t)\sum_{e \in E} \sign(e) d^{t-1}_e \geq g
\end{align}
for some unknown value $g > 0$.
This assumption is similar to the one in ERLPBoost~\cite{warmuth-etal:alt08}. 
Under the above assumption, we state the iteration bound.
\begin{theorem}
    \label{thm:convergence_guarantee_of_compressed_erlpboost_strong}
    If $
        \eta = \frac{4}{\varepsilon} 
        \mathop{\rm depth}(G) \max\{1, \ln \frac 1 \nu\}
    $, then Algorithm~\ref{alg:compressed_erlpboost} 
    outputs a solution whose value of~(\ref{prob:zdd_softmargin_dual})
    is at least $g - \varepsilon$.
    Algorithm~\ref{alg:compressed_erlpboost} runs at most
    $
        T \leq \frac{144}{\varepsilon^2}
        \mathop{\rm depth}(G)^2 \max\left(1, \ln \frac 1 \nu \right)
    $
    iterations.
\end{theorem}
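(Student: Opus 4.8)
The plan is to transplant the convergence analysis of ERLPBoost to the present setting, in which the distribution over the $m$ examples is replaced by a flow $\bm{d}$ over the edges $E$ of $G$ and the relative-entropy regularizer is the unnormalized Bregman divergence $\Delta(\bm{d},\bm{d}^0)=\sum_{e}(d_e\ln\frac{d_e}{d^0_e}-d_e+d^0_e)$ appearing in~(\ref{eq:compressed_erlpboost_subproblem}). Writing $g_j(\bm{d})=\sign(j)\sum_{e:j\in\Phi(e)}\sign(e)d_e$, $F_t(\bm{d})=\max_{j\in J_t}g_j(\bm{d})$, and $P^t(\bm{d})=F_t(\bm{d})+\frac1\eta\Delta(\bm{d},\bm{d}^0)$ over the flow polytope $\mathcal{F}$ cut out by~(\ref{const:d1})--(\ref{const:d3}), Algorithm~\ref{alg:compressed_erlpboost} is exactly entropy-regularized, totally corrective column generation: $\bm{d}^t=\arg\min_{\bm{d}\in\mathcal{F}}P^t(\bm{d})$, and the weak learner adjoins the globally most violated coordinate $j_t=\arg\max_j g_j(\bm{d}^{t-1})$. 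First I would record two elementary consequences of this description. (i) $P^t(\bm{d}^t)$ is nondecreasing in $t$ and bounded above by the value $P^\star$ of the unrestricted regularized problem. (ii) Because $j_t$ maximizes over \emph{all} coordinates, $F_q(\bm{d}^{q-1})=\max_{j\in[n+1]}g_j(\bm{d}^{q-1})$, so $P^q(\bm{d}^{q-1})$ equals the full regularized objective evaluated at $\bm{d}^{q-1}$, which is $\ge P^\star$; hence $\delta^t=\min_{q\le t}P^q(\bm{d}^{q-1})-P^{t-1}(\bm{d}^{t-1})$ is a genuine optimality gap for the regularized problem, and $\delta^t\le P^t(\bm{d}^{t-1})-P^{t-1}(\bm{d}^{t-1})=v_t$, where $v_t:=g_{j_t}(\bm{d}^{t-1})-F_{t-1}(\bm{d}^{t-1})\ge 0$ is the violation of the newly added constraint.

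The step that brings in $\mathrm{depth}(G)$ --- and essentially the only place where the structure of $G$ enters --- is a bound on the diameter of the regularizer over $\mathcal{F}$. I would decompose any feasible flow $\bm{d}$, and also the initial flow $d^0_e=m_e/m$, into root-to-leaf paths; since the total flow equals $1$ and every path uses at most $\mathrm{depth}(G)$ edges, the total mass obeys $\sum_e d_e\le\mathrm{depth}(G)$ and likewise $\sum_e d^0_e\le\mathrm{depth}(G)$. Combining this with the cap $d_e\le m_e/(\nu m)=d^0_e/\nu$ from~(\ref{const:d3}), which forces $\ln(d_e/d^0_e)\le\ln\frac1\nu$, yields $\Delta(\bm{d},\bm{d}^0)\le\mathrm{depth}(G)\,(1+\ln\frac1\nu)\le 2\,\mathrm{depth}(G)\max\{1,\ln\frac1\nu\}$. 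With the prescribed $\eta=\frac4\varepsilon\mathrm{depth}(G)\max\{1,\ln\frac1\nu\}$, the regularizer therefore contributes at most $\varepsilon/2$ to the objective everywhere on $\mathcal{F}$, so the regularized optimum lies within $\varepsilon/2$ of the optimum $\gamma^\star$ of the LP~(\ref{prob:zdd_softmargin_dual}). At termination, $\delta^{T+1}\le\varepsilon/2$ gives $P^T(\bm{d}^T)\ge P^\star-\varepsilon/2\ge\gamma^\star-\varepsilon/2$ through~(ii), while the same $\varepsilon/2$ regularizer bound gives $\gamma_T\ge P^T(\bm{d}^T)-\varepsilon/2$ for the restricted-master value $\gamma_T$; chaining the two half-$\varepsilon$ gaps yields $\gamma_T\ge\gamma^\star-\varepsilon$.

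The heart of the argument, and the step I expect to be the main obstacle, is the per-iteration progress lemma: whenever the algorithm does not stop (so $v_t\ge\delta^t>\varepsilon/2$), the regularized optimum increases by $P^t(\bm{d}^t)-P^{t-1}(\bm{d}^{t-1})=\Omega\!\big(\varepsilon^2/(\mathrm{depth}(G)\max\{1,\ln\frac1\nu\})\big)$. I would obtain this as in ERLPBoost, exploiting that the unnormalized relative entropy is $(\sum_e d_e)^{-1}$-strongly convex with respect to $\|\cdot\|_1$ on $\mathcal{F}$ (a Cauchy--Schwarz/Pinsker-type inequality), hence $\frac{1}{\eta\,\mathrm{depth}(G)}$-strongly convex there: since $g_{j_t}\le F_t$, the function $g_{j_t}(\bm{d})+\frac1\eta\Delta(\bm{d},\bm{d}^0)$ lower-bounds $P^t$ and is strongly convex, and expanding it around the $P^{t-1}$-optimal point $\bm{d}^{t-1}$ while using the first-order optimality of $\bm{d}^{t-1}$ for $P^{t-1}$ to cancel the cross term turns the violation $v_t>\varepsilon/2$ into the stated decrease. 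The delicate parts are fixing the strong-convexity constant over the flow polytope (it is the total-mass factor $\sum_e d_e$, bounded by $\mathrm{depth}(G)$, that converts one factor of depth into the entropy diameter) and keeping the direction of every inequality correct; these are the routine-but-error-prone calculations that also fix the absolute constant. Given the lemma, I telescope: $P^t(\bm{d}^t)$ is monotone with total range at most that of the margin term $F$, namely $O(\mathrm{depth}(G))$ since $|g_j(\bm{d})|\le\sum_e d_e\le\mathrm{depth}(G)$; dividing range by progress gives at most $O\!\big(\mathrm{depth}(G)\cdot\mathrm{depth}(G)\max\{1,\ln\frac1\nu\}/\varepsilon^2\big)$ non-terminating iterations, and carrying the constants reproduces $T\le\frac{144}{\varepsilon^2}\mathrm{depth}(G)^2\max\{1,\ln\frac1\nu\}$. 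The two powers of depth thus arise transparently, one from the entropy diameter and one from the range of the objective.

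Finally I would invoke weak learnability. Assumption~(\ref{eq:weak_learner_guarantee}) states that on every distribution the weak learner returns a coordinate of edge at least $g$, i.e.\ $\max_j g_j(\bm{d})\ge g$ for all $\bm{d}\in\mathcal{F}$; by the minimax identity this makes the LP optimum satisfy $\gamma^\star=\min_{\bm{d}\in\mathcal{F}}\max_j g_j(\bm{d})\ge g$. Combining with the previous paragraphs, Algorithm~\ref{alg:compressed_erlpboost} halts within the stated number of iterations, and solving the restricted dual over $J_T$ (its last line) returns a value $\gamma_T\ge\gamma^\star-\varepsilon\ge g-\varepsilon$, as claimed. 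I would finish by checking the $t=1$ edge case (empty $J_0$) and confirming that the unknown $g$ never enters the iteration count --- it is used only in this last comparison --- so the bound on $T$ holds unconditionally.
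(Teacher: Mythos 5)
Your proposal follows the paper's architecture in outline (entropy diameter supplying one factor of depth, the total-mass bound $\sum_e d_e \le \mathop{\rm depth}(G)$ supplying the other, the stopping criterion yielding $g-\varepsilon$ accuracy), and your local deviations are mostly sound: the direct entropy-diameter bound via the cap $d_e \le d^0_e/\nu$, giving $\Delta(\bm d,\bm d^0)\le\mathop{\rm depth}(G)(1+\ln\frac 1 \nu)$, is actually simpler than the paper's layering-and-edge-duplication argument in Lemma~\ref{lem:relative_entropy_bound} and still fits the prescribed $\eta$; and a primal strong-convexity derivation of the progress inequality is a plausible replacement for the paper's dual line search (which mixes $\bm w^{t-1}$ with the new coordinate via a step $\alpha$, applies Jensen's inequality with $x^t_e\in[-2,2]$, and optimizes $\alpha$). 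Both routes, however, can only deliver a gain of the quadratic form in Lemma~\ref{lem:erlpboost_lemma3}, namely $P^t(\bm d^t)-P^{t-1}(\bm d^{t-1}) \ge \frac{1}{18\eta\mathop{\rm depth}(G)}\left[P^t(\bm d^{t-1})-P^{t-1}(\bm d^{t-1})\right]^2$, and the modulus necessarily carries $1/\eta$ because the regularizer is $\frac 1 \eta \Delta$. Here lies the genuine gap: your asserted per-iteration progress $\Omega\bigl(\varepsilon^2/(\mathop{\rm depth}(G)\max\{1,\ln\frac 1 \nu\})\bigr)$ does not follow. At the stopping threshold $v_t>\varepsilon/2$, the quadratic bound gives only $\frac{(\varepsilon/2)^2}{18\eta\mathop{\rm depth}(G)}=\Theta\bigl(\varepsilon^3/(\mathop{\rm depth}(G)^2\max\{1,\ln\frac 1 \nu\})\bigr)$ once $\eta=\frac{4}{\varepsilon}\mathop{\rm depth}(G)\max\{1,\ln\frac 1 \nu\}$ is substituted. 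Dividing the objective's range by this uniform progress yields at best $T=O\bigl(\mathop{\rm depth}(G)^2\max\{1,\ln\frac 1 \nu\}/\varepsilon^3\bigr)$, a factor $1/\varepsilon$ short of the theorem (and with your loose range bound $O(\mathop{\rm depth}(G))$, a further depth factor short). Uniform-progress telescoping is structurally unable to recover the $1/\varepsilon^2$ rate.

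The paper closes this by measuring progress against the \emph{current} gap rather than its terminal value: combining Lemma~\ref{lem:erlpboost_lemma3} with the bookkeeping $P^t(\bm d^{t-1})-P^{t-1}(\bm d^{t-1})\ge\delta^t$ and $P^t(\bm d^t)-P^{t-1}(\bm d^{t-1})\le\delta^{t-1}-\delta^t$ gives the recursion $\delta^{t-1}-\delta^t\ge(\delta^t)^2/(c\eta)$ with $c=18\mathop{\rm depth}(G)$, and the recursion lemma of Abe et al.\ (Lemma~\ref{lem:erlpboost_recursion}) converts this into $\delta^t\le c\eta/(t-1+c\eta/\delta^1)$, i.e., $O(1/t)$ decay of the gap; since $\delta^T>\varepsilon/2$ before termination, $T\le c\eta/\delta^T\le\frac{144}{\varepsilon^2}\mathop{\rm depth}(G)^2\max\{1,\ln\frac 1 \nu\}$. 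Intuitively, early rounds with large gaps make quadratically larger progress, and exploiting that is exactly what the $1/\varepsilon^2$ rate requires; you should replace your telescoping step with this recursion. One smaller repair: your step~(ii), asserting $P^q(\bm d^{q-1})\ge P^\star$, presumes the weak learner returns the exact maximizer, whereas the theorem is proved under the weaker guarantee~(\ref{eq:weak_learner_guarantee}); the paper avoids this (and avoids any minimax detour through $\gamma^\star$) by deriving $g\le\min_t P^t(\bm d^{t-1})$ directly from the nonnegativity of the unnormalized relative entropy, which makes the accuracy claim $\gamma_T\ge g-\varepsilon$ go through verbatim under the weak assumption.
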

To prove Theorem~\ref{thm:convergence_guarantee_of_compressed_erlpboost_strong},
we give some technical lemmata.
\begin{lemma}
    Let $\mathop{\rm depth}(G) = \max_{P \in \mathcal P_G \subset 2^E} |P|$.
    Let $\bm d \in \Real^{|E|}$ be any feasible solution of 
    (\ref{eq:compressed_erlpboost_subproblem}). 
    Then, $\sum_{e \in E} d_e \leq \mathop{\rm depth}(G)$.
\end{lemma}
\begin{proof}
    Let $G'$ be a layered NZDD of $G$ obtained 
    by adding some redundant nodes. 
    Since this operation only increases the number of edges, 
    $\sum_{e \in E} d_e \leq \sum_{e \in E'} d_e$ holds, 
    where $E'$ is the set of edges of $G'$. 
    let $E_k' = \{ e \in E' \mid e \text{ is an edge at depth } k \}$.
    Then, 
    \begin{align*}
        \sum_{e \in E} d_e 
        \leq \sum_{e \in E'} d_e
        = \sum_{k=1}^K \sum_{e \in E_k'} d_e
        = \sum_{k=1}^K 1 = \mathop{\rm depth}(G),
    \end{align*}
    where at the second equality, 
    we used the fact that for a feasible $\bm d$,
    the total flow at any depth equals to $1$.
\end{proof}
The following lemma shows 
an upper bound of the unnormalized relative entropy 
for a feasible distribution. 
\begin{lemma}
    \label{lem:relative_entropy_bound}
    Let $\bm{d}$ be a feasible solution 
    to~(\ref{eq:compressed_erlpboost_subproblem}) for an NZDD $G$.
    Then, the following inequality holds.
    \begin{align}
        \nonumber
        \sum_{e \in E} d_e \ln \frac{d_e}{d^0_e} - d_e + d^0_e
        \leq \mathop{\rm depth}(G) \ln \frac 1 \nu
    \end{align}
\end{lemma}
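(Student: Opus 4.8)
The plan is to bound the unnormalized relative entropy
$\sum_{e \in E} \left( d_e \ln \frac{d_e}{d^0_e} - d_e + d^0_e \right)$
by exploiting the two feasibility constraints that $\bm{d}$ must satisfy:
the flow conservation constraints (\ref{const:d1})--(\ref{const:d2}),
which force the total flow across each layer to equal $1$
(as already established in the preceding lemma),
and the box constraint (\ref{const:d3}), which gives $d_e \leq \frac{m_e}{\nu m}$.
First I would recall that the summand $d_e \ln \frac{d_e}{d^0_e} - d_e + d^0_e$
is the Bregman divergence of the negative-entropy function,
hence nonnegative, and crucially \emph{monotone} in a useful sense.
Substituting the initial distribution $d^0_e = \frac{m_e}{m}$,
the box constraint reads $d_e \leq \frac{d^0_e}{\nu}$,
so each ratio is bounded by $\frac{d_e}{d^0_e} \leq \frac{1}{\nu}$.

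\textbf{Exploiting the box constraint to bound the logarithmic term.}

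The key observation is that the function
$\phi(d) = d \ln \frac{d}{d^0} - d + d^0$
is convex with $\phi(0) = d^0 \geq 0$ and increasing for $d \geq d^0$,
so on the feasible interval $0 \leq d_e \leq \frac{d^0_e}{\nu}$
(assuming $\nu \leq 1$ so the upper endpoint dominates),
the logarithmic factor satisfies $\ln \frac{d_e}{d^0_e} \leq \ln \frac{1}{\nu}$.
Dropping the nonpositive contribution $-d_e + d^0_e$ where it helps,
I would bound
$d_e \ln \frac{d_e}{d^0_e} - d_e + d^0_e \leq d_e \ln \frac{1}{\nu}$,
handling the small-$d_e$ regime separately since there
$\ln \frac{d_e}{d^0_e}$ can be negative, making the whole summand
at most $d^0_e - d_e \leq d^0_e$, which only helps.
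The cleanest route is to verify pointwise that
$\phi(d_e) \leq d_e \ln \frac{1}{\nu}$ on $[0, d^0_e/\nu]$,
reducing the problem to summing $d_e \ln \frac 1 \nu$ over all edges.

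\textbf{Summing over edges and invoking the previous lemma.}

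Once the per-edge bound $\phi(d_e) \leq d_e \ln \frac{1}{\nu}$ is in hand,
I would sum over all $e \in E$ to obtain
\begin{align*}
    \sum_{e \in E} \left( d_e \ln \frac{d_e}{d^0_e} - d_e + d^0_e \right)
    \leq \left( \ln \frac{1}{\nu} \right) \sum_{e \in E} d_e
    \leq \mathop{\rm depth}(G) \ln \frac{1}{\nu},
\end{align*}
where the final inequality is exactly the preceding lemma,
which states $\sum_{e \in E} d_e \leq \mathop{\rm depth}(G)$ for any feasible $\bm{d}$.
This chains the two structural facts---the layerwise unit flow giving the
$\mathop{\rm depth}(G)$ factor and the box constraint giving the $\ln \frac 1 \nu$
factor---into the claimed bound.

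\textbf{Anticipated main obstacle.}

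The step I expect to require the most care is the pointwise inequality
$\phi(d_e) \leq d_e \ln \frac{1}{\nu}$ over the full feasible interval,
because the summand is not monotone: for $d_e < d^0_e$ the logarithm is
negative and the comparison direction can flip.
I would resolve this by treating the convexity of $\phi$ directly---since
$\phi$ is convex and $\phi(d^0_e/\nu) = \frac{d^0_e}{\nu}\ln\frac{1}{\nu} - \frac{d^0_e}{\nu} + d^0_e$,
one checks the desired linear upper bound holds at the endpoints and uses
convexity (or a short derivative computation showing
$\phi(d)/d$ is increasing) to conclude it holds throughout.
The remaining aggregation is routine given the layered-NZDD argument from
the previous lemma.
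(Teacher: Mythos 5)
Your high-level plan (the box constraint supplies the $\ln\frac{1}{\nu}$ factor, the flow structure supplies the $\mathop{\rm depth}(G)$ factor) matches the paper's, but the pointwise step your proof rests on is false, and its aggregate form fails too. Writing $\phi(d)=d\ln\frac{d}{d^0_e}-d+d^0_e$, you have $\phi(0)=d^0_e>0$ while your proposed majorant $d\ln\frac{1}{\nu}$ vanishes at $d=0$, so the ``endpoint check'' you describe fails at the left endpoint; the bound $\phi(d)\le d\ln\frac{1}{\nu}$ holds only for $d\ge d^0_e$ (e.g.\ via $\phi'(d)=\ln(d/d^0_e)\le\ln\frac{1}{\nu}$), and on $[0,d^0_e)$ a surplus of up to $d^0_e$ remains (also, $\phi(d)/d\to\infty$ as $d\to 0^+$, so it is not increasing). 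Your remark that this regime ``only helps'' is exactly where the argument breaks: the leftover $d^0_e$ terms on low-flow edges must be cancelled against the $-d_e$ terms of \emph{other} edges, and that cancellation is only available within a common layer, where $\sum_{e\in E_k}d_e=\sum_{e\in E_k}d^0_e=1$; globally, $\sum_e d_e$ and $\sum_e d^0_e$ need not be equal in a non-layered NZDD. Concretely, take $m-1$ parallel root-to-leaf edges plus one root-to-leaf chain of $L$ edges (all label sets distinct), so $m_e=1$ and $d^0_e=1/m$ on every edge and $\mathop{\rm depth}(G)=L$. For $\nu\le\frac{m-1}{m}$ (with $\nu m$ an integer) a feasible $\bm{d}$ puts mass $\frac{1}{\nu m}$ on $\nu m$ parallel edges and zero on the chain; the unnormalized relative entropy then equals $\ln\frac{1}{\nu}+\frac{L-1}{m}$, whereas your intermediate bound is $\ln\frac{1}{\nu}\sum_e d_e=\ln\frac{1}{\nu}$ --- violated by $\frac{L-1}{m}$, which grows without bound in $L$. (The lemma itself survives because its right-hand side is $L\ln\frac{1}{\nu}$.)

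The paper's proof supplies precisely the structure your argument is missing. It first converts $G$ into a layered NZDD (subdividing edges only increases the left-hand side, since each Bregman summand is nonnegative, and it preserves $\mathop{\rm depth}(G)$) and duplicates each edge $m_e$ times so that the reference flow is uniform, $d^0_e=1/m$, with box constraint $d_e\le\frac{1}{\nu m}$. Then, \emph{before} any use of the box constraint, the linear terms are cancelled layer by layer: in each layer $E_k$ both $\bm{d}$ and $\bm{d}^0$ carry total flow $1$, so $\sum_{e\in E_k}(-d_e+d^0_e)=0$, leaving only $\sum_{e\in E_k}d_e\ln\frac{d_e}{d^0_e}\le\ln\frac{1}{\nu}\sum_{e\in E_k}d_e=\ln\frac{1}{\nu}$, and summing over the $\mathop{\rm depth}(G)$ layers finishes. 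In short, the order of operations matters: cancel the affine part along cuts first, then apply the logarithmic bound; no per-edge or whole-sum convexity argument on the original (non-layered) NZDD can close the gap.
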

\begin{proof}
    Without loss of generality, we can assume that the NZDD is layered. 
    Indeed, if the NZDD is not layered, 
    we can add dummy nodes and dummy edges to make the NZDD layered. 
    Further, we can increase the edges on the NZDD to make $m_e = 1$ 
    for all edge $e \in E$. 
    Figure~\ref{fig:layered_dd} %
    and~\ref{fig:edge_duplication} depict
    these manipulations.
    With these conversions, the initial distribution $\bm{d}^0$ becomes 
    $d^0_e = 1/m$ for all $e \in E$ and the feasible region becomes 
    Let $E_k \subset E$ be the set of edges at depth $k$. 
    \begin{align}
        \label{eq:relative_entropy_bound_proof_1}
        \sum_{e \in E} d_e \ln \frac{d_e}{d^0_e} - d_e + d^0_e
        = \sum_{k=1}^{\mathop{\rm depth}(G)} \left(
            \sum_{e \in E_k}
            d_e \ln \frac{d_e}{d^0_e} - d_e + d^0_e
            \right)
        = \sum_{k=1}^{\mathop{\rm depth}(G)} \sum_{e \in E_k}
            d_e \ln \frac{d_e}{d^0_e},
    \end{align}
    where the last equality holds since in each layer, 
    the total flow equals to one. 
    For each edge $e \in E$ with $m_e > 1$, 
    define a set $\hat{E}(e)$ of duplicated edges 
    of size $|\hat{E}(e)| = m_e$. 
    Further, define 
    \begin{align}
        \forall \hat{e} \in \hat{E}(e), \quad
        \hat{d}_{\hat{e}} = \frac{d_e}{m_e}, \quad
        \hat{d}^{0}_{\hat{e}} = \frac{d^0_e}{m_e} =\frac{1}{\nu m}.
        \nonumber
    \end{align}
    Then, 
    \begin{align}
        \nonumber
        \sum_{\hat{e} \in \hat{E}(e)} \hat{d}_{\hat{e}}
        \ln \frac{\hat{d}_{\hat{e}}}{\hat{d}^0_{\hat{e}}}
        = \sum_{\hat{e} \in \hat{E}(e)} \frac{d_e}{m_e}
        \ln \frac{d_e / m_e}{d^0_e / m_e}
        = d_e \ln \frac{d_e}{d^0_e}
    \end{align}
    holds. Thus, for any feasible solution $\bm{d}$, 
    we can realize the same relative entropy value by $\hat{\bm{d}}$. 
    Therefore, we can rewrite~(\ref{eq:relative_entropy_bound_proof_1}) as
    \begin{align}
        \label{eq:relative_entropy_bound_proof_2}
        \sum_{k=1}^{\mathop{\rm depth}(G)} \sum_{e \in E_k}
            d_e \ln \frac{d_e}{d^0_e}
        = \sum_{k=1}^{\mathop{\rm depth}(G)} 
            \sum_{\hat{e} \in \bigcup_{e \in E} \hat{E}(e)}
            \hat{d}_{\hat{e}}
            \ln \frac{\hat{d}_{\hat{e}}}{\hat{d}^0_{\hat{e}}}.
    \end{align}
    By construction of $\hat{E}(e)$, $\hat{d}_{\hat{e}} \in [0, 1/\nu m]$.
    Therefore, we can bound 
    eq.~(\ref{eq:relative_entropy_bound_proof_2}) by
    \begin{align}
        \nonumber
        \sum_{k=1}^{\mathop{\rm depth}(G)} \sum_{e \in E_k}
            d_e \ln \frac{d_e}{d^0_e}
        \leq
        \sum_{k=1}^{\mathop{\rm depth}(G)} \ln \frac{m}{\nu m}
        = \mathop{\rm depth}(G) \frac 1 \nu.
    \end{align}
\end{proof}
\begin{figure}[t]
    \centering
    \begin{minipage}[t]{0.45\linewidth}
        \centering
        \includegraphics[scale=0.4]{./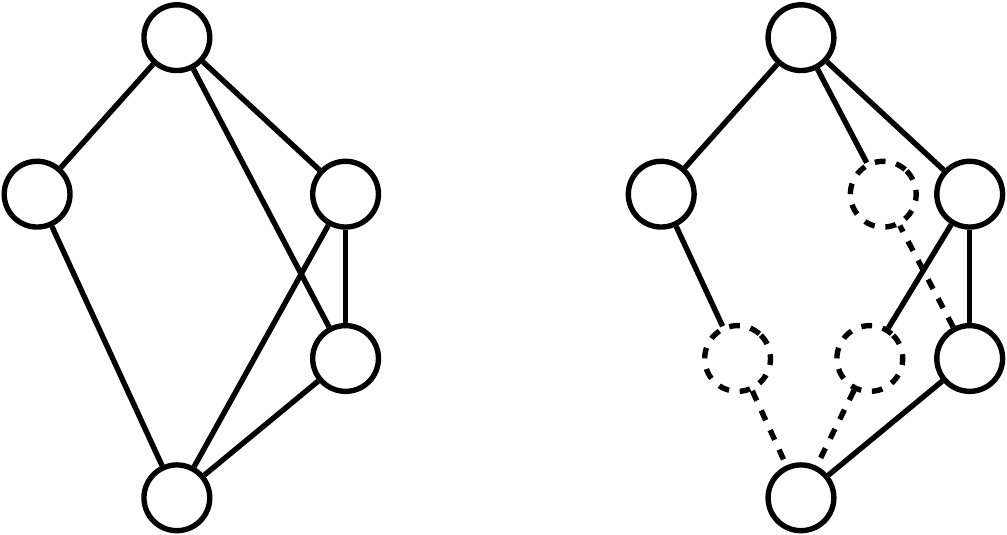}
        \caption{%
            A conversion from an NZDD to a layered NZDD. %
            The added nodes and edges are depicted with dotted lines. %
            Note that this manipulation does not change the depth of %
            the original NZDD. %
        }
        \label{fig:layered_dd}
    \end{minipage}
    \begin{minipage}[t]{0.45\linewidth}
        \centering
        \includegraphics[scale=0.4]{./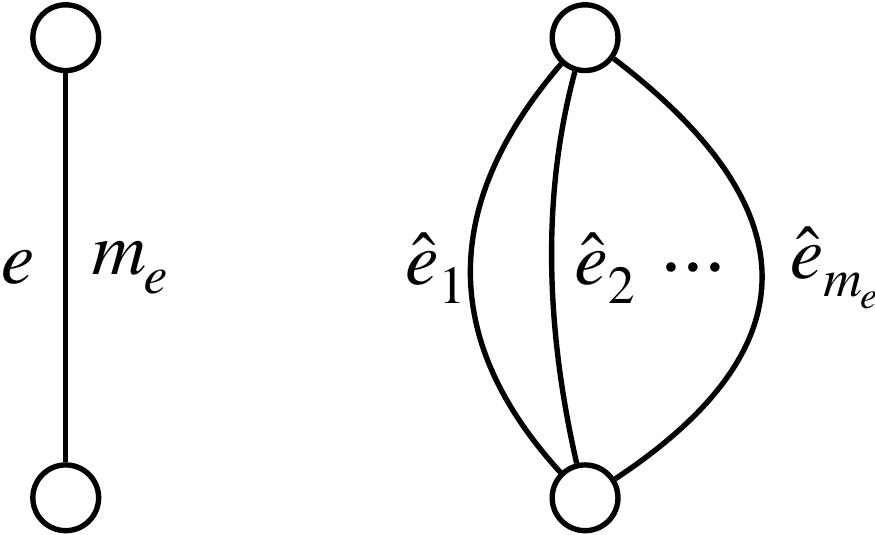}
        \caption{%
            Edge duplication for an edge $e \in E$, i.e., 
            duplicating the edge to the parallel edges of size $m_e$. %
            With this manipulation, $m_{\hat{e}_k}$ becomes one for all %
            duplicated edges $\hat{e}_k \in \hat{E}(e)$. %
        }
        \label{fig:edge_duplication}
    \end{minipage}
\end{figure}
Let $\delta_t = \min_{k \in [t]} P^k(\bm{d}^{k-1}) - P^{t-1}(\bm{d}^{t-1})$ 
be the optimality gap. The following lemma justificates 
the stopping criterion for algorithm~\ref{alg:compressed_erlpboost}.
\begin{lemma}
    \label{lem:erlp_accuracy_guarantee}
    Let $P^{T}_{\mathrm{LP}}$ be an optimal solution 
    of~(\ref{prob:zdd_softmargin_dual})
    over $J_T = \{j_1, j_2, \dots, j_T\}$.
    If $
        \eta \geq \frac{2}{\varepsilon} \mathop{\rm depth}(G) 
        \ln \frac 1 \nu
    $, then $\delta^{T+1} \leq \varepsilon / 2$ implies 
    $g - P^T_{\mathrm{LP}} \leq \varepsilon$, 
    where $g$ is the guarantee of the base learner.
\end{lemma}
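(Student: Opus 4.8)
The plan is to bound $P^T_{\mathrm{LP}}$ from below by chaining three estimates that link the base-learner guarantee $g$, the value $P^T(\bm{d}^T)$ of the regularized subproblem at termination, and $P^T_{\mathrm{LP}}$ itself. First I would put $P^t$ in closed form. For a fixed feasible $\bm{d}$, minimizing the objective of~(\ref{eq:compressed_erlpboost_subproblem}) over the free variable $\gamma$ subject only to the constraints $\sign(j)\sum_{e:\,j\in\Phi(e)}\sign(e)d_e\le\gamma$ for $j\in J_t$ forces $\gamma$ down to the tight value, giving
\[
P^t(\bm{d})=\max_{j\in J_t}\sign(j)\sum_{e:\,j\in\Phi(e)}\sign(e)d_e+\frac{1}{\eta}\left(\sum_{e\in E}\left(d_e\ln\frac{d_e}{d^0_e}-d_e+d^0_e\right)\right).
\]
Here it is important to note that the flow constraints~(\ref{const:d1}),~(\ref{const:d2}) and the box constraints~(\ref{const:d3}) do not depend on $J_t$, so every iterate $\bm{d}^{k-1}$ remains feasible for the $k$-th subproblem and $P^k(\bm{d}^{k-1})$ is well defined.

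The first estimate is a uniform lower bound on the running minimum. Since $j_k\in J_k$, the maximum defining $P^k(\bm{d}^{k-1})$ is at least the $j_k$-term, which is at least $g$ by the weak-learner guarantee~(\ref{eq:weak_learner_guarantee}); as the unnormalized relative entropy is nonnegative, $P^k(\bm{d}^{k-1})\ge g$ for every $k$. Hence $\min_{1\le k\le T+1}P^k(\bm{d}^{k-1})\ge g$. The second estimate invokes the stopping rule: by definition $\delta^{T+1}=\min_{1\le k\le T+1}P^k(\bm{d}^{k-1})-P^T(\bm{d}^T)$, so $\delta^{T+1}\le\varepsilon/2$ combined with the first estimate yields $P^T(\bm{d}^T)\ge g-\varepsilon/2$.

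The third estimate relates the regularized value to the LP value over $J_T$. Let $\bm{d}^\dagger$ be a minimizer of $\max_{j\in J_T}\sign(j)\sum_{e:\,j\in\Phi(e)}\sign(e)d_e$, i.e.\ an optimal solution of~(\ref{prob:zdd_softmargin_dual}) restricted to $J_T$, attaining value $P^T_{\mathrm{LP}}$. Since $\bm{d}^T$ minimizes $P^T$, evaluating $P^T$ at $\bm{d}^\dagger$ gives $P^T(\bm{d}^T)\le P^T_{\mathrm{LP}}+\frac{1}{\eta}\sum_{e\in E}(d^\dagger_e\ln\frac{d^\dagger_e}{d^0_e}-d^\dagger_e+d^0_e)\le P^T_{\mathrm{LP}}+\frac{1}{\eta}\mathop{\rm depth}(G)\ln\frac1\nu$ by Lemma~\ref{lem:relative_entropy_bound}. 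With $\eta\ge\frac{2}{\varepsilon}\mathop{\rm depth}(G)\ln\frac1\nu$ the trailing term is at most $\varepsilon/2$, so $P^T_{\mathrm{LP}}\ge P^T(\bm{d}^T)-\varepsilon/2\ge g-\varepsilon$, which is the claim.

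The main obstacle I anticipate is not any single inequality but assembling them with the right feasibility bookkeeping: justifying the closed-form rewriting of $P^t$, confirming that $\bm{d}^{k-1}$ is feasible for the $k$-th subproblem (so that the running-minimum term is meaningful), and correctly matching the edge quantity appearing inside the $\max$ with the left-hand side of the base-learner guarantee~(\ref{eq:weak_learner_guarantee}). Once these are pinned down, the role of $\eta$ is simply to make the entropy penalty contribute at most $\varepsilon/2$, and the stopping threshold contributes the other $\varepsilon/2$.
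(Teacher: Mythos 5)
Your proof is correct and follows essentially the same route as the paper's: the entropy bound of Lemma~\ref{lem:relative_entropy_bound} with the choice of $\eta$ gives $P^T(\bm d^T)\leq P^T_{\mathrm{LP}}+\varepsilon/2$, the weak-learner guarantee plus nonnegativity of the unnormalized relative entropy gives $\min_{k\in[T+1]}P^k(\bm d^{k-1})\geq g$, and the stopping criterion $\delta^{T+1}\leq\varepsilon/2$ closes the gap. Your extra bookkeeping (the closed-form rewriting of $P^t$, feasibility of $\bm d^{k-1}$ across subproblems, and matching the edge inside the $\max$ with~(\ref{eq:weak_learner_guarantee}), which as printed omits the restriction to $e$ with $j_t\in\Phi(e)$) only makes explicit what the paper's proof leaves implicit.
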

Note that if the algorithm 
always finds a hypothesis that maximizes 
the right-hand-side of~(\ref{eq:weak_learner_guarantee}), 
then this lemma guarantees the $\varepsilon$-accuracy to 
the optimal solution of~(\ref{prob:zdd_softmargin_dual}). 
\begin{proof}
    Let $P^T(\bm d)$ be the objective function of 
    (\ref{eq:compressed_erlpboost_subproblem}) over 
    $J_T = \{j_1, j_2, \dots, j_T\}$ obtained 
    by algorithm~\ref{alg:compressed_erlpboost} 
    and let $\bm d^T$ be the optimal feasible solution of it.
    By the choice of $\eta$ and lemma~\ref{lem:relative_entropy_bound}, 
    we get
    \begin{align}
        \nonumber
        \frac 1 \eta \sum_{e \in E} \left[
            d_e \ln \frac{d_e}{d^0_e} - d_e + d^0_e
        \right]
        \leq \frac 1 \eta \mathop{\rm depth}(G) 
            \ln \frac 1 \nu \leq \frac \varepsilon 2.
    \end{align}
    Thus, $P^T(\bm d^T) \leq P^T_{\mathrm LP} + \varepsilon/2$.
    On the other hand, 
    by the assumption on the weak learner, 
    for any feasible solution $\bm{d}^{t-1}$, 
    we obtain a $j_t \in [n+1]$ such that
    \begin{align}
        \nonumber
        g \leq \sign(j_t)\sum_{e \in E} \sign(e) d^{t-1}_e.
    \end{align}
    Using the nonnegativity of
    the unnormalized relative entropy, we get 
    \begin{align}
        g 
        \leq \min_{t \in [T+1]}
        \sign(j_t)\sum_{e \in E} \sign(e) d^{t-1}_e
        \leq \min_{t \in [T+1]} P^t(\bm d^{t-1}).
        \nonumber
    \end{align}
    Subtracting $P^T(\bm d^T)$ from both sides, we get 
    \begin{align*}
        g - P^T(\bm d^T) 
        \leq \min_{t \in [T+1]} P^t(\bm d^{t-1}) - P^t(\bm d^t)
        = \delta^{T+1} \leq \frac \varepsilon 2.
    \end{align*}
    Thus, $\delta_{T+1} \leq \varepsilon/2$ implies 
    $g - P^T_{LP} \leq \varepsilon$. 
\end{proof}
With lemma~\ref{lem:erlp_accuracy_guarantee}, 
We can obtain a similar iteration bound like ERLPBoost.
To see that, we need to derive the dual problem of 
~(\ref{eq:compressed_erlpboost_subproblem}).
By standard calculation, you can verify that the dual problem becomes:
\begin{align}
    \label{eq:compressed_erlpboost_dual}
    \max_{\bm w, \bm s, \bm \beta} \; &
    \frac 1 \eta \sum_{e \in E}
    d^0_e \left( 1 - e^{-\eta A^t_e(\bm{w}, \bm{s}, \bm{\beta})} \right)
    + s_{\rootnode} - s_{\leafnode}
    - \frac{1}{\nu m} \sum_{e \in E} m_e \beta_e \\
    \text{sub. to.} &
    \sum_{j \in J_t} w_j = 1, \bm w \geq \bm 0, \bm \beta \geq \bm 0
    \nonumber \\
    & A^t_e(\bm{w}, \bm{s}, \bm{\beta})
    := \sign(e) \sum_{j \in J_t} \sign(j) I_{[j \in \Phi(e)]} w_j
    + s_{e.u} - s_{e.v} + \beta_e,
    \quad \forall e \in E \nonumber
\end{align}
Let $P^t(\bm d)$, $D^t(\bm w, \bm s, \bm \beta)$ be the objective function 
of the optimization sub-problems 
(\ref{eq:compressed_erlpboost_subproblem}),
(\ref{eq:compressed_erlpboost_dual}), respectively.
Let $\bm d^t$ be the optimal solution of 
(\ref{eq:compressed_erlpboost_subproblem}) at round $t$ and similarly, 
let $(\bm w^t, \bm s^t, \bm \beta^t)$ be the one of 
(\ref{eq:compressed_erlpboost_dual}).
Then, by KKT conditions, the following hold.
\begin{align}
    \label{eq:compressed_erlpboost_kkt}
    \sum_{j \in J_t} \sign(j) w^t_j \left[
        \sum_{e \in E: j \in \Phi(e)} \sign(e) d_e 
    \right]
    = \max_{j \in J_t} \sign(j) \sum_{e \in E: j \in \Phi(e)} \sign(e) d_e
\end{align}

We will prove the following lemma, which corresponds to 
Lemma 2 in~\cite{warmuth-etal:alt08}.
\begin{lemma}
    \label{lem:erlpboost_lemma3}
    If $\eta \geq 1/3$, then 
    \begin{align}
        \nonumber
        P^t(\bm d^t) - P^{t-1}(\bm d^{t-1}) 
        \geq \frac{1}{18 \eta \mathop{\rm depth}(G)} \left[
        P^t(\bm d^{t-1}) - P^{t-1}(\bm d^{t-1})
        \right]^2,
    \end{align}
    where $\mathop{\rm depth}(G)$ denotes the max depth of graph $G$.
\end{lemma}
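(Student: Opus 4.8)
The plan is to mirror the proof of Lemma~2 in~\cite{warmuth-etal:alt08}, working on the \emph{dual} side so that the quantity to be bounded is the increase of a maximum rather than of a minimum. By strong duality between~(\ref{eq:compressed_erlpboost_subproblem}) and~(\ref{eq:compressed_erlpboost_dual}) we have $P^t(\bm d^t)=\max D^t$ and $P^{t-1}(\bm d^{t-1})=\max D^{t-1}$, where the maxima are over the simplex $\{\bm w\ge\bm 0:\sum_{j\in J_t}w_j=1\}$ together with the free block $\bm s$ and $\bm\beta\ge\bm 0$. Since $J_t$ differs from $J_{t-1}$ only by the newly selected coordinate $j_t$, it suffices to exhibit \emph{one} feasible dual point for $D^t$ whose value exceeds $P^{t-1}(\bm d^{t-1})$ by the required amount: lower-bounding a maximum by a single point is exactly the direction we need, and it is what makes the dual viewpoint convenient.

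Concretely, let $(\bm w^{t-1},\bm s^{t-1},\bm\beta^{t-1})$ attain $\max D^{t-1}$ and set $\bm w(\alpha)=(1-\alpha)\bm w^{t-1}+\alpha\,\bm e_{j_t}$ for $\alpha\in[0,1]$, keeping $\bm s^{t-1},\bm\beta^{t-1}$ fixed; this stays feasible for $D^t$ and, at $\alpha=0$, recovers the value $P^{t-1}(\bm d^{t-1})$ because $w^{t-1}_{j_t}=0$. I would then expand $\phi(\alpha):=D^t(\bm w(\alpha),\bm s^{t-1},\bm\beta^{t-1})$ around $\alpha=0$. Via the Lagrangian identification $d^{t-1}_e=d^0_e e^{-\eta A_e}$ and the KKT relation~(\ref{eq:compressed_erlpboost_kkt}), the first derivative $\phi'(0)$ equals the edge of the new hypothesis minus the current maximal edge, i.e. exactly the gap $g_t:=P^t(\bm d^{t-1})-P^{t-1}(\bm d^{t-1})\ge0$ (nonnegativity holds because the weak learner returns the global edge-maximizer). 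The second derivative is controlled by $\eta$ times the induced mass $\sum_e d^0_e e^{-\eta A_e}$, which at $\alpha=0$ equals $\eta\sum_e d^{t-1}_e\le\eta\,\mathop{\rm depth}(G)$ by the earlier mass lemma, using that every coefficient $\sign(j_t)\sign(e)\,I_{[j_t\in\Phi(e)]}$ lies in $\{-1,0,1\}$. This gives $\phi(\alpha)\ge P^{t-1}(\bm d^{t-1})+\alpha g_t-c\,\eta\,\mathop{\rm depth}(G)\,\alpha^2$ for a constant $c$, and optimizing the scalar $\alpha$ yields a gain of order $g_t^2/(\eta\,\mathop{\rm depth}(G))$; tracking the absolute constant through the remainder produces the stated $1/(18\,\eta\,\mathop{\rm depth}(G))$.

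I expect the main obstacle to be the \emph{uniform} control of the second-order (and higher) remainder of $\phi$. Because $D^t$ contains the nonlinear terms $e^{-\eta A^t_e}$, the induced mass $\sum_e d^0_e e^{-\eta A_e(\alpha)}$ need not stay bounded by $\mathop{\rm depth}(G)$ for intermediate $\alpha$ (the corresponding $\bm d$ need not be flow-feasible); one must instead bound it by $e^{2\eta\alpha}\sum_e d^{t-1}_e$ and restrict $\alpha$ to a small range, which is precisely where the hypothesis $\eta\ge1/3$ and the loose constant $18$ enter, exactly as in~\cite{warmuth-etal:alt08}. Note that the flow polytope~(\ref{const:d1})--(\ref{const:d3}) causes no trouble here, since all of the above happens on the dual variables $(\bm w,\bm s,\bm\beta)$, whose feasible set is a product of a simplex, a free block, and a nonnegative orthant; the flow structure re-enters only through $\sum_e d^{t-1}_e\le\mathop{\rm depth}(G)$. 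As an independent check valid whenever $\eta$ is large --- the regime actually used, $\eta=\tfrac{4}{\varepsilon}\mathop{\rm depth}(G)\max\{1,\ln\tfrac1\nu\}$ --- one can argue directly on the primal: both $P^t$ and $P^{t-1}$ are $\tfrac{1}{\eta\,\mathop{\rm depth}(G)}$-strongly convex in $\|\cdot\|_1$ (Cauchy--Schwarz applied to the Hessian $\mathrm{diag}(1/d_e)$ of the unnormalized relative entropy, together with $\sum_e d_e\le\mathop{\rm depth}(G)$), each edge functional is $1$-Lipschitz in $\|\cdot\|_1$, and combining $P^{t-1}(\bm d)-P^{t-1}(\bm d^{t-1})\ge\tfrac{\mu}{2}\|\bm d-\bm d^{t-1}\|_1^2$ with $P^t(\bm d)-P^{t-1}(\bm d)\ge g_t-2\|\bm d-\bm d^{t-1}\|_1$ and then setting $\bm d=\bm d^t$ reproduces the quadratic bound with the even smaller constant $8$.
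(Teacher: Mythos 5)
Your main argument is, in substance, the paper's own proof. The paper likewise passes to the dual~(\ref{eq:compressed_erlpboost_dual}), evaluates $D^t$ along $\bm w^t(\alpha)=(1-\alpha)\begin{bmatrix}\bm w^{t-1}\\0\end{bmatrix}+\alpha\begin{bmatrix}\bm 0\\1\end{bmatrix}$ with $(\bm s^{t-1},\bm\beta^{t-1})$ frozen, uses the KKT identity $d^{t-1}_e=d^0_e e^{-\eta A^{t-1}_e}$ together with~(\ref{eq:compressed_erlpboost_kkt}) to identify the first-order term with the gap $P^t(\bm d^{t-1})-P^{t-1}(\bm d^{t-1})=\sum_e d^{t-1}_e x^t_e$, and closes with the mass bound $\sum_e d^{t-1}_e\le\mathop{\rm depth}(G)$ — exactly your skeleton. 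The one technical divergence is how the resulting scalar function of $\alpha$ is bounded: you propose a second-order expansion with an $e^{2\eta\alpha}$ remainder factor and a small fixed $\alpha$, whereas the paper (following \cite{warmuth-etal:alt08}) bounds $e^{-\eta\alpha x^t_e}$ for $x^t_e\in[-2,2]$ via Jensen's inequality with weights $\frac{3\pm x^t_e}{6}$ and then optimizes $\alpha$ exactly in closed form~(\ref{eq:compressed_erlpboost_optimal_alpha}), the hypothesis $\eta\ge 1/3$ serving only to certify that the optimal $\alpha\le 1$; both devices deliver the quadratic gain (e.g.\ $\alpha=g_t/(8\eta\sum_e d^{t-1}_e)$ makes your version go through with a constant even better than $1/18$ — just note $|x^t_e|\le 2$, not $1$, since $x^t_e$ subtracts the $\bm w^{t-1}$-weighted edge). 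Your primal strong-convexity ``independent check'' is genuinely different from anything in the paper and is attractive for its brevity, but as literally written it only yields $g_t-2r+\frac{\mu}{2}r^2\ge g_t-\frac{2}{\mu}$, which is not quadratic in $g_t$: you must additionally clip the linear term using $P^t\ge P^{t-1}$ pointwise, i.e.\ bound below by $(g_t-2r)_+ +\frac{\mu}{2}r^2\ge\frac{\mu}{8}g_t^2$, and even then this needs $g_t\le 4\eta\mathop{\rm depth}(G)$ (automatic once $\eta\ge 1/2$, hence in the regime $\eta=\frac{4}{\varepsilon}\mathop{\rm depth}(G)\max\{1,\ln\frac{1}{\nu}\}$ actually used, consistent with your own caveat, but not for all $\eta\ge 1/3$ as the lemma is stated).
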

\begin{proof}
    First of all, we examine the right hand side of the inequality. 
    By definition, $P^t(\bm d^{t-1}) \geq P^{t-1}(\bm d^{t-1})$ and 
    \begin{align}
        P^t(\bm d^{t-1}) - P^{t-1}(\bm d^{t-1})
        = & \sign(j_t)\sum_{e \in E: j_t \in \Phi(e)} \sign(e) d^{t-1}_e
            - \max_{j \in J_{t-1}} \sign(j)
            \sum_{e \in E: j \in \Phi(e)} \sign(e) d^{t-1}_e \nonumber \\
        = & \sign(j_t)\sum_{e \in E: j_t \in \Phi(e)} \sign(e) d^{t-1}_e
            - \sum_{j \in J_{t-1}} \sign(j) w^{t-1}_j
            \sum_{e \in E: j \in \Phi(e)} \sign(e) d^{t-1}_e \nonumber \\
        = & \sum_{e \in E} \sign(e) d^{t-1}_e \left[
                \sign(j_t) I_{[j_t \in \Phi(e)]}
                - \sum_{j \in J_{t-1}} \sign(j)
                I_{[j \in \Phi(e)]}
                w^{t-1}_j
            \right] \nonumber \\
        =: & \sum_{e \in E} d^{t-1}_e x^t_e, \nonumber
    \end{align}
    where the second equality holds from (\ref{eq:compressed_erlpboost_kkt}).

    Now, we will bound $P^t(\bm d^t) - P^{t-1}(\bm d^{t-1})$ from below. 
    For $\alpha \in [0, 1]$, let
    \begin{align}
        \bm w^t(\alpha) := 
        (1-\alpha) \begin{bmatrix} \bm w^{t-1} \\ 0 \end{bmatrix}
        + \alpha \begin{bmatrix} \bm 0 \\ 1 \end{bmatrix}.
    \end{align}
    Since $(\bm w^t, \bm s^t, \bm \beta^t)$ is the optimal solution of 
    (\ref{eq:compressed_erlpboost_subproblem}) at round $t$, 
    $
        D^t(\bm w^t, \bm s^t, \bm \beta^t)
        \geq D^t(\bm w^t(\alpha), \bm s^{t-1}, \bm \beta^{t-1})
    $ 
    holds.
    By strong duality,
    \begin{align}
        P^t(\bm d^t) - P^{t-1}(\bm d^{t-1})
        = & D^t(\bm w^t, \bm s^t, \bm \beta^t)
            - D^{t-1}(\bm w^{t-1}, \bm s^{t-1}, \bm \beta^{t-1})
            \nonumber \\
        \geq & D^t(\bm w^t(\alpha), \bm s^{t-1}, \bm \beta^{t-1})
            - D^{t-1}(\bm w^{t-1}, \bm s^{t-1}, \bm \beta^{t-1})
            \nonumber \\
        = & \frac 1 \eta \sum_{e \in E} d^0_e \left(
            1 - e^{
                - \eta
                A^t_e (\bm{w}^t(\alpha), \bm{s}^{t-1}, \bm{\beta}^{t-1})
            }
        \right)
        - \frac 1 \eta \sum_{e \in E} d^0_e \left(
            1 - e^{
                - \eta
                A^{t-1}_e(\bm{w}^{t-1}, \bm{s}^{t-1}, \bm{\beta}^{t-1})
            }
        \right) \nonumber \\
        = & - \frac 1 \eta \sum_{e \in E} d^0_e \left(
                e^{
                    - \eta A^t_e
                    (\bm{w}^{t}(\alpha), \bm{s}^{t-1}, \bm{\beta}^{t-1})
                }
                - e^{
                    - \eta A^{t-1}_e
                    (\bm{w}^{t-1}, \bm{s}^{t-1}, \bm{\beta}^{t-1})
                }
            \right). \nonumber
    \end{align}
    By KKT conditions, we can write $\bm d^{t-1}$ in terms of the 
    dual variables $(\bm w^{t-1}, \bm s^{t-1}, \bm \beta^{t-1})$:
    \begin{align}
        d^{t-1}_e
        = d^0_e \exp\left[
            - \eta \left(
                \sign(e)
                \sum_{j \in J_{t-1}} \sign(j) I_{[j \in \Phi(e)]} w^{t-1}_j
                + s^{t-1}_u - s^{t-1}_v + \beta^{t-1}_e
            \right)
        \right]
        = d^0_e e^{
            - \eta A^{t-1}_e (\bm{w}^{t-1}, \bm{s}^{t-1}, \bm{\beta}^{t-1})
        }
        \nonumber
    \end{align}
    Therefore, 
    \begin{align*}
        P^t(\bm d^t) - P^{t-1}(\bm d^{t-1})
        \geq - \frac 1 \eta \sum_{e \in E} d^0_e 
            e^{
                -\eta A^{t-1}_e
                (\bm{w}^{t-1}, \bm{s}^{t-1}, \bm{\beta}^{t-1})
            }
            \left( e^{-\eta \alpha x^t_e} - 1 \right)
        = - \frac 1 \eta \sum_{e \in E} d^{t-1}_e 
            \left( e^{-\eta \alpha x^t_e} - 1 \right).
    \end{align*}
    Since $x^t_e \in [-2, +2]$, $\frac{3 \pm x^t_e}{6} \in [0, 1]$ and 
    $\frac{3 + x^t_e}{6} + \frac{3 - x^t_e}{6} = 1$. Thus, using 
    Jensen's inequality, we get
    \begin{align*}
        P^t(\bm d^t) - P^{t-1}(\bm d^{t-1})
        \geq & - \frac 1 \eta \sum_{e \in E} d^{t-1}_e 
            \left(
                \exp \left[
                    \frac{3 + x^t_e}{6} (-3 \eta \alpha)
                    + \frac{3 - x^t_e}{6} (3 \eta \alpha)
                \right]
                - 1
            \right) \\
        \geq & - \frac 1 \eta \sum_{e \in E} d^{t-1}_e 
            \left(
                \frac{3 + x^t_e}{6} e^{-3 \eta \alpha}
                + \frac{3 - x^t_e}{6} e^{3 \eta \alpha}
                - 1
            \right) =: R(\alpha).
    \end{align*}
    The above inequality holds for all $\alpha \in [0, 1]$.
    Here, $R(\alpha)$ is a concave function w.r.t. $\alpha$ so that 
    we can choose the optimal $\alpha \in [0, 1]$.
    By standard calculation, we get that the optimal $\alpha \in \Real$ is 
    \begin{align}
        \label{eq:compressed_erlpboost_optimal_alpha}
        \alpha = \frac{1}{6\eta}
        \ln 
        \frac{\sum_{e \in E} d^{t-1}_e (3 + x^t_e)}
             {\sum_{e \in E} d^{t-1}_e (3 - x^t_e)}
    \end{align}
    Since $x^t_e \leq 2$ for all $e \in E$, 
    $\alpha \leq \frac{1}{6\eta} \ln \frac \ln 5 < \frac{1}{3\eta}$. 
    Thus, $\alpha \leq 1$ holds. 
    On the other hand, 
    $\sum_{e \in E} d^{t-1}_e x^t_e \geq 0$ so that 
    $\alpha \geq \frac{1}{6\eta} \ln 1 = 0$. 
    Therefore, we can use (\ref{eq:compressed_erlpboost_optimal_alpha}) 
    to lower-bound $P^t(\bm d^t) - P^{t-1}(\bm d^{t-1})$.
    \begin{align*}
        P^t(\bm d^t) - P^{t-1}(\bm d^{t-1})
        \geq & - \frac 1 \eta \left[
            \frac 1 3 \sqrt{
                \left( \sum_{e \in E} d^{t-1}_e (3 + x^t_e) \right)
                \left( \sum_{e \in E} d^{t-1}_e (3 - x^t_e) \right)
            } - \sum_{e \in E} d^{t-1}_e
        \right] \\
        = & - \frac 1 \eta \left[
            \frac 1 3 \sqrt{
                9 \left(\sum_{e \in E} d^{t-1}_e\right)^2
                - \left(\sum_{e \in E} d^{t-1}_e x^t_e \right)^2
            } - \sum_{e \in E} d^{t-1}_e
        \right] 
    \end{align*}
    By using the inequality
    \begin{align*}
        \forall a, b \geq 0, 
        3a \geq b \implies
        \frac 1 3 \sqrt{9a^2 - b^2} - a
        \leq - \frac{b^2}{18a},
    \end{align*}
    we get
    \begin{align*}
        P^t(\bm d^t) - P^{t-1}(\bm d^{t-1})
        \geq & \frac{1}{18 \eta}
            \frac{\left(\sum_{e \in E} d^{t-1}_e x^t_e\right)^2}
                 {\sum_{e \in E} d^{t-1}_e} \\
        \geq & \frac{1}{18\eta}
            \frac{\left(\sum_{e \in E} d^{t-1}_e x^t_e\right)^2}
                 {\mathop{\rm depth}(G)} 
        = \frac{1}{18\eta \mathop{\rm depth}(G)}
        \left[ P^t(\bm d^{t-1}) - P^{t-1}(\bm d^{t-1}) \right]^2,
    \end{align*}
    which is the inequality we desire.
\end{proof}
We introduce the following lemma 
to prove the iteration bound of the compressed ERLPBoost. 
\begin{lemma}[\cite{abe+:ieice01}]
    \label{lem:erlpboost_recursion}
    Let $(\delta^t)_{t \in \mathbb{N}} \subset \Real_{\geq 0}$ 
    be a sequence such that 
    \begin{align*}
        \exists c > 0, \forall t \geq 1,
        \delta^t - \delta^{t+1} \geq \frac{(\delta^t)^2}{c}.
    \end{align*}
    Then, the following inequality holds for all $t \geq 1$.
    \begin{align*}
        \delta^t \leq \frac{c}{t - 1 + \frac{c}{\delta^1}}
    \end{align*}
\end{lemma}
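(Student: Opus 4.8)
The plan is to pass to the reciprocals $1/\delta^t$ and telescope, which is the standard device for turning a quadratic decrease into an $O(c/t)$ rate. First I would record that the hypothesis $\delta^t - \delta^{t+1} \geq (\delta^t)^2/c \geq 0$ forces the sequence to be nonincreasing, so $0 \leq \delta^{t+1} \leq \delta^t$ for every $t$. Assuming $\delta^1 > 0$ (the case $\delta^1 = 0$ being trivial, since then the whole sequence vanishes and the bound holds), this monotonicity is exactly what is needed to control the cross term after dividing by $\delta^t \delta^{t+1}$.

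The key step is the one-line estimate
\begin{align*}
\frac{1}{\delta^{t+1}} - \frac{1}{\delta^t}
= \frac{\delta^t - \delta^{t+1}}{\delta^t \delta^{t+1}}
\geq \frac{(\delta^t)^2/c}{\delta^t \delta^{t+1}}
= \frac{\delta^t}{c\,\delta^{t+1}}
\geq \frac{1}{c},
\end{align*}
where the first inequality invokes the recursion and the last uses $\delta^t \geq \delta^{t+1}$. Summing over $1, 2, \dots, t-1$ telescopes the left-hand side, giving $\frac{1}{\delta^t} - \frac{1}{\delta^1} \geq \frac{t-1}{c}$, equivalently $\frac{1}{\delta^t} \geq \frac{t-1}{c} + \frac{1}{\delta^1}$. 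Inverting both sides then yields
\begin{align*}
\delta^t \leq \frac{1}{\frac{t-1}{c} + \frac{1}{\delta^1}}
= \frac{c}{t - 1 + \frac{c}{\delta^1}},
\end{align*}
which is precisely the claimed bound.

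The only point requiring genuine care is the degenerate case where $\delta^{t_0} = 0$ for some $t_0$: monotonicity and nonnegativity then force $\delta^t = 0$ for all $t \geq t_0$, so the reciprocal manipulation is invalid from $t_0$ onward, but the conclusion is immediate there because the left-hand side is $0$ while the right-hand side stays positive. I would therefore state at the outset that it suffices to run the telescoping argument on the initial stretch where $\delta^t > 0$. Beyond this bookkeeping of positivity, I expect no substantive obstacle, as the statement is a purely elementary recursion; indeed, this is the same $O(c/t)$ recursion that underlies Frank--Wolfe-type convergence analyses.
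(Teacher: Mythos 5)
Your proof is correct: the reciprocal-telescoping estimate $\frac{1}{\delta^{t+1}} - \frac{1}{\delta^t} \geq \frac{1}{c}$ (using monotonicity from the nonnegativity of $(\delta^t)^2/c$) is valid, the summation and inversion are sound, and you handle the degenerate case $\delta^{t_0}=0$ appropriately. Note that the paper itself gives no proof of this lemma --- it is imported by citation from prior work --- and your argument is exactly the standard proof of that cited recursion bound, so there is no divergence to report.
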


\begin{proof}[%
    Proof of %
    Theorem~\ref{thm:convergence_guarantee_of_compressed_erlpboost_strong}%
]
    By definition of $\eta$, $\eta \geq 1/3$ holds.
    Thus, by lemma~\ref{lem:erlpboost_lemma3}, we get 
    \begin{align*}
        P^t(\bm d^t) - P^{t-1}(\bm d^{t-1})
        \geq & \frac{1}{18\eta \mathop{\rm depth}(G)}
        \left[ P^t(\bm d^{t-1}) - P^{t-1}(\bm d^{t-1}) \right]^2 \\
        \geq & \frac{1}{18\eta \mathop{\rm depth}(G)}
        \left[ 
            \min_{1 \leq q \leq t} P^q(\bm d^{q-1})
            - P^{t-1}(\bm d^{t-1}) 
        \right]^2 
        = \frac{(\delta^t)^2}{18\eta \mathop{\rm depth}(G)}.
    \end{align*}
    On the other hand,
    \begin{align*}
        P^t(\bm d^t) - P^{t-1}(\bm d^{t-1})
        = & \left(
            \min_{1 \leq q \leq t} P^{q}(\bm d^{q-1}) 
            - P^{t-1}(\bm d^{t-1}) 
        \right) - \left(
            \min_{1 \leq q \leq t} P^{q}(\bm d^{q-1}) - P^t(\bm d^t) 
        \right) \\
        \leq & \left(
            \min_{1 \leq q \leq t-1} P^{q}(\bm d^{q-1}) 
            - P^{t-1}(\bm d^{t-1}) 
        \right) - \left(
            \min_{1 \leq q \leq t} P^{q}(\bm d^{q-1}) - P^t(\bm d^t) 
        \right) \\
        = & \delta^{t-1} - \delta^t.
    \end{align*}
    Thus, we get the following relation:
    \begin{align*}
        \delta^{t-1} - \delta^t \geq 
        \frac{(\delta^t)^2}{c \eta}, \quad
        \text{where} \quad c = 18 \mathop{\rm depth}(G).
    \end{align*}
    Lemma~\ref{lem:erlpboost_recursion} gives us 
    $\delta^t \leq \frac{c \eta}{t-1 + \frac{c \eta}{\delta^1}}$.
    Rearranging this inequality, we get 
    $T \leq \frac{c\eta}{\delta^T} - \frac{c\eta}{\delta^1} + 1$.
    Since 
    $
        \delta^1 
        = \sum_{e \in E: j_1 \in \Phi(e)} \sign(e) d^0_e 
        \leq \mathop{\rm depth}(G)
    $, we have
    $
        \frac{c\eta}{\delta^1}
        \geq \frac{72}{\varepsilon} \mathop{\rm depth}(G)
        > 1
    $.
    Therefore, the above inequality implies that 
    $T \leq \frac{c \eta}{\delta^T}$.
    As long as the stopping criterion does not satisfy, 
    $\delta^t > \varepsilon / 2$ hold.
    Thus, 
    \begin{align*}
        T 
        \leq & \frac{1}{\delta^T} \cdot 36 \mathop{\rm depth}(G) 
            \frac{2}{\varepsilon} \mathop{\rm depth}(G) 
            \cdot \max\left( 1, \ln \frac 1 \nu \right) \\
        \leq & \frac{144}{\varepsilon^2}
            \mathop{\rm depth}(G)^2 \max\left( 1, \ln \frac 1 \nu \right).
    \end{align*}
\end{proof}

\section{Construction of NZDDs}
\label{sec:construction}

We propose heuristics for constructing NZDDs 
given a subset family $S \subseteq 2^{\Sigma}$.
We use the zcomp \cite{toda:ds13,toda:zcomp}, developed by Toda, 
to compress the subset family $S$ to a ZDD. 
The zcomp is designed based on 
multikey quicksort~\cite{bentley-sedgewick:soda97} 
for sorting strings.
The running time of the zcomp is $O(N \log^2|S|)$, 
where $N$ is an upper bound of the nodes of the output ZDD and $|S|$ is 
the sum of cardinalities of sets in $S$. 
Since $N \leq |S|$, the running time is almost linear in the input. 

A naive application of the zcomp is,
however, not very successful in our experiences.
We observe that the zcomp often produces concise ZDDs compared to inputs. 
But, concise ZDDs do not always imply 
concise representations of linear constraints. 
More precisely, the output ZDDs of the zcomp often contains 
(i) nodes with one incoming edge or 
(ii) nodes with one outgoing edge.
A node $v$ of these types introduces a corresponding variable $s_v$ 
and linear inequalities. 
Specifically, in the case of type (ii), we have 
$s_v \leq \sum_{j\Phi(e)}z'_j + s_{e.u}$ for each $e\in E$ s.t. $e.v=v$, and 
for its child node $v'$ and edge $e'$ between $v$ and $v'$,
$s_{v'} \leq \sum_{j \in \Phi(e')}z'_j + s_v$.
These inequalities are redundant 
since we can obtain equivalent inequalities by concatenating them: 
$s_{v'} \leq  \sum_{j \in \Phi(e')} z'_j + \sum_{j \in \Phi(e)} z'_j + s_{e.u}$ 
for each $e\in E$ s.t. $e.v = v$, where $s_v$ is removed.

Based on the observation above, 
we propose a simple reduction heuristics removing nodes of type (i) and (ii). 
More precisely, given an NZDD $G = (V,E)$, 
the heuristics outputs an NZDD $G' = (V',E')$ such that 
$L(G) = L(G')$ and $G'$ does not contain nodes of type (i) or (ii).
The heuristics can be implemented 
in O($|V'|+|E'|+\sum_{e\in E'}|\Phi(e)|$) time 
by going through nodes of the input NZDD $G$ 
in the topological order from the leaf to the root and 
in the reverse order, respectively. 
The details of the heuristics is given in Appendix.

\section{Experiments}
\label{sec:experiments}
We show preliminary experimental results on synthetic and real large data sets
\footnote{Codes are available at \url{https://bitbucket.org/kohei_hatano/codes_extended_formulation_nzdd/}.}. 
We performed mixed integer programming and 
$1$-norm regularized soft margin optimization.
Our experiments are conducted on a server with 2.60 GHz Intel Xeon Gold 6124 CPUs and 314GB memory.
We use Gurobi optimizer 9.01,
a state-of-the-art commercial LP solver.
To obtain NZDD representations of data sets, we apply the procedure described 
in the previous section.
The details of preprocessing of data sets 
and NZDD representations are shown in Appendix.

\subsection{Mixed Integer programming on synthetic datasets}
First, we apply our extended formulation (\ref{prob:lin_const_opt}) to 
mixed integer programming tasks over synthetic data sets.
The problems are defined as the linear optimization with $n$ variables and $m$ linear constraints 
of the form $\matA \vecx \geq \vecb$, where 
(i) each row of $\matA$ has $k$ entries of $1$ and others are $0$s and nonzero entries are chosen randomly
without repetition
(ii) coefficients $a_i$ of linear objective $\sum_{i=1}^n{a_ix_i}$ is chosen from {1,...,100} randomly, and
(iii) first $l$ variables take binary values in $\{0,1\}$ and others take real values in $[0,1]$.
In our experiments, we fix $n=25, k=10$, $l=12$ and $m \in \{4\times 10^5, 8\times 10^5...,20\times10^5\}$.
We apply the Gurobi optimizer directly to the problem
denoted as \mytt{mip} 
and the solver with pre-processing the problem by our extended formulation 
(denoted as \mytt{nzdd\_mip}, respectively.
The results are summarized in Figure~\ref{fig:mip_artificial}. 
Our method consistently improves computation time for these datasets.
This makes sense 
since it can be shown that 
when $m=O(n^k)$ there exists an NZDD of size $O(nk)$ representing the constraint matrix.
In addition, the pre-processing time is within 2 seconds in all cases. 

\begin{figure}[ht]
    \begin{center}
        \includegraphics[scale=.35,keepaspectratio]{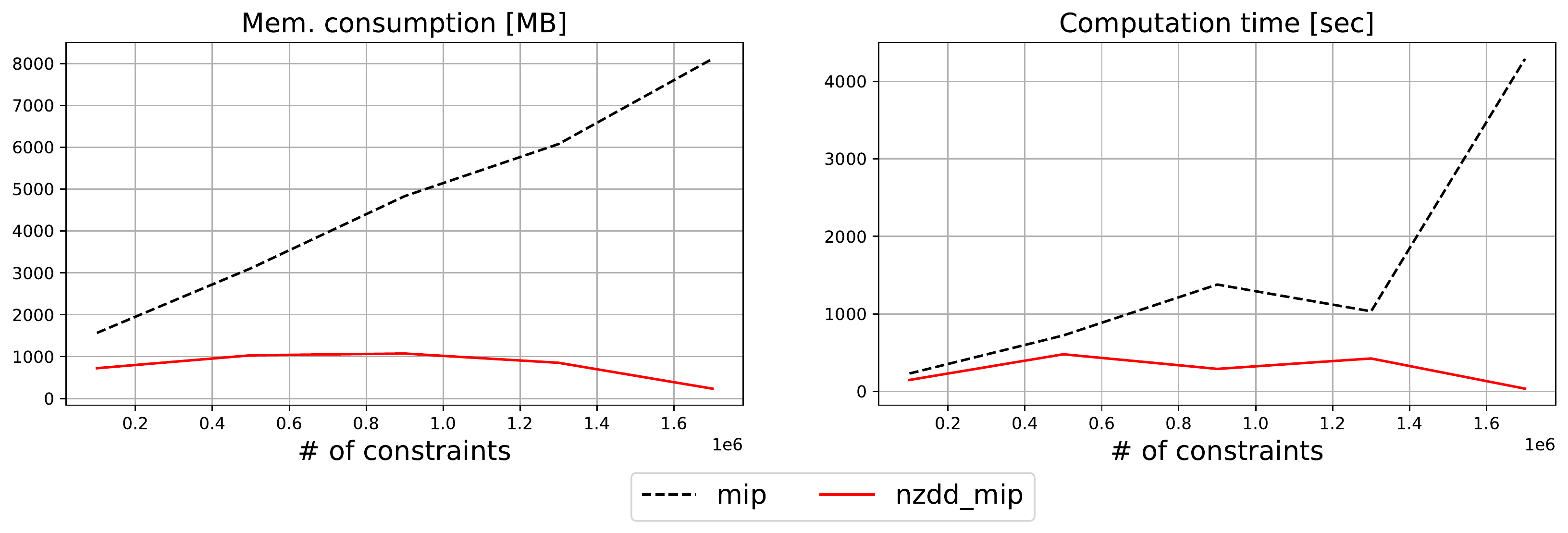}
        \caption{%
            The comparison for synthetic datasets of a MIP problem. %
            The horizontal axis represents the number of constraints of the original problem.
        }
        \label{fig:mip_artificial}
    \end{center}
    
\end{figure}




\subsection{$1$-norm soft margin optimization on real data sets}
Next, we apply our methods on the task of the $1$-norm soft margin optimization. 
This problem is a standard optimization problem in the machine learning literature, 
categorized as LP, for finding sparse linear classifiers given labeled instances. 
We compare the following methods using a naive LP solver.
\begin{enumerate}
    \item a naive LP solver (denoted as \mytt{naive}), 
          \label{item:original-softmargin-start}
    \item LPBoost~(\cite{demiriz-etal:ml02},
          denoted as \mytt{lpb}),
          a column generation-based method,
    \item ERLPBoost~(\cite{warmuth-etal:alt08}, denoted as \mytt{erlp}), 
          a modification of LPBoost with a non-trivial iteration bound.
          \label{item:original-softmargin-end}
    \item a naive LP solver (denoted as \mytt{nzdd\_naive}), 
          \label{item:nzdd-softmargin-start}
    \item Algorithm~\ref{alg:cg_zdd} (denoted as \mytt{nzdd\_lpb}), 
    \item Algorithm~\ref{alg:compressed_erlpboost}
          (denoted as \mytt{nzdd\_erlpb}). 
          \label{item:nzdd-softmargin-end}
\end{enumerate}
Methods~\ref{item:original-softmargin-start}--\ref{item:original-softmargin-end}
solves the original soft margin optimization 
problem~(\ref{prob:softmargin_primal}),
while~\ref{item:nzdd-softmargin-start}--\ref{item:nzdd-softmargin-end} solves
the problem~(\ref{prob:zdd_softmargin_primal2}).
We measure its computation time (CPU time) 
and maximum memory consumption, respectively, 
and compare their averages over parameters. 
Further, we performed $5$-fold cross validation 
to check the test error rates of our methods on real data sets. 
Table~\ref{tab:test_err} shows that 
our formulation (\ref{prob:zdd_softmargin_primal2}) is
competitive with the original soft margin optimization.

We compare methods on some real data sets in the libsvm datasets~\cite{chang-lin:libsvm}
to see the effectiveness of our approach in practice. 
Generally, the datasets contain huge samples ($m$ varies from $3\times10^4$ to $10^7$) 
with a relatively small size of features ($n$ varies from $20$ to $10^5$). 
The features of instances of each dataset is transformed into binary values.
Note that these results exclude NZDD construction times
since the compression takes around $1$ second,
except for the HIGGS dataset (around $13$ seconds). 
Furthermore, the construction time of NZDDs can be neglected in the following reason:
We often need to try multiple choices of the hyperparameters ($\nu$ in our case) and 
solve the optimization problem for each set of choices.
But once we construct an NZDD, 
we can be re-use it for different values of hyperparameters without reconstructing NZDDs.



\begin{figure*}[ht]
    \begin{center}
        \includegraphics[
            width=15cm,height=10cm,keepaspectratio
        ]{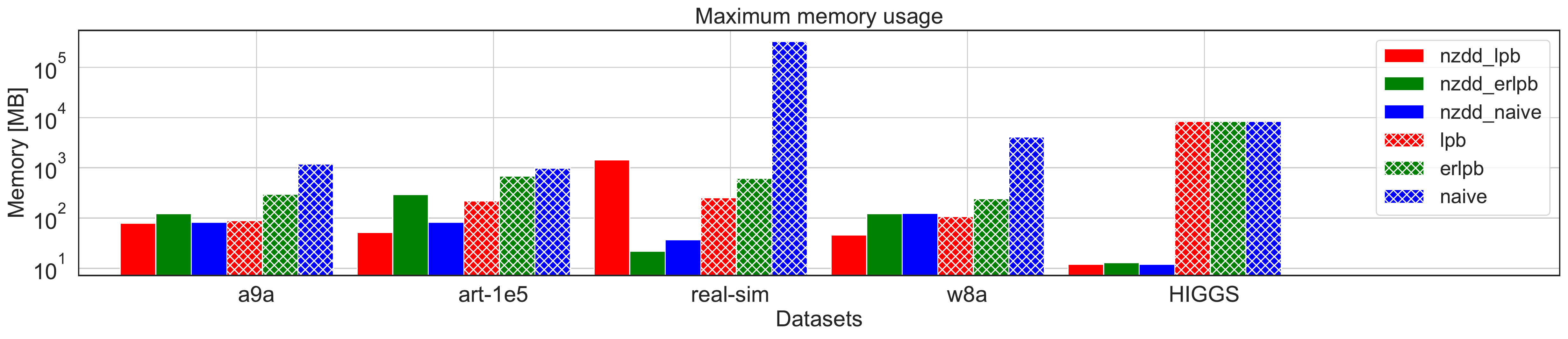}
        \caption{%
            Comparison of maximum memory consumption %
            of the soft margin optimization %
            for real datasets. %
            The y-axes are  plotted in the logarithmic scale.%
        }
        \label{fig:memory_soft_real}
    \end{center}
\end{figure*}

\begin{figure*}[ht]
    \begin{center}
        \includegraphics[
            width=15cm,height=10cm,keepaspectratio
        ]{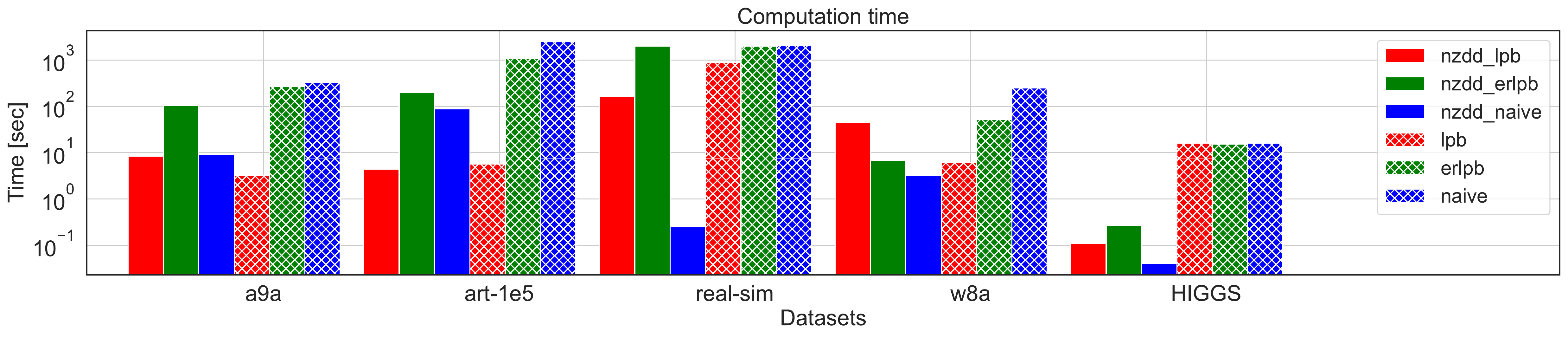}
        \caption{%
            Comparison of running time of the soft margin optimization %
            for real datasets. %
            The y-axes are  plotted in the logarithmic scale.%
        }
        \label{fig:time_soft_real}
    \end{center}
\end{figure*}

\begin{table}[ht]
    \begin{center}
        \caption{Test error rates for real datasets}
        \label{tab:test_err}
        \begin{tabular}{rrrr} \toprule
    Data sets  & \mytt{lpb} & \mytt{nzdd\_naive} & \mytt{nzdd\_erlp} \\ \midrule 
    a9a        & $0.174$ &  $0.159$ & $0.157$ \\
    art-100000 & $0.000$ & $0.0004$ & $0.004$ \\
    real-sim   & $0.179$ &  $0.169$ & $0.532$ \\
    w8a        & $0.030$ &  $0.030$ & $0.029$ \\ \bottomrule
\end{tabular}

    \end{center}
\end{table}

\paragraph{$1$-norm soft margin optimization on synthetic datasets}
We show experimental results for the $1$-norm soft margin optimization 
on synthetic datasets.
We use a class of synthetic datasets that have small NZDD representations 
when the samples are large. 
First, we choose $m$ instances in $\{0,1\}^n$ 
uniformly randomly without repetition. 
Then we consider the following linear threshold function %
$f(\vecx)=\sign(\sum_{j=1}^k x_j - r+ 1/2)$, 
where $k$ and $r$ are positive integers such that $1\leq r\leq k\leq n$).
That is, $f(x)=1$ if and only if 
at least $r$ of the first $k$ components are $1$. 
Each label of the instance $\vecx\in\{0,1\}^n$ is labeled by $f(\vecx)$. 
It can be easily shown that 
the whole labeled $2^n$ instances of $f$ is 
represented by an NZDD (or ZDD) of size $O(kr)$, 
which is exponentially small w.r.t. the sample size $m=2^n$.  
We fix $n=20$, $k=10$ and $r=5$. 
Then we use $m \in \{1\times 10^5,2 \times 10^5...,10^6\}$.

\begin{figure}[ht]
    \begin{center}
        \includegraphics[
            width=8cm,height=8cm,keepaspectratio
        ]{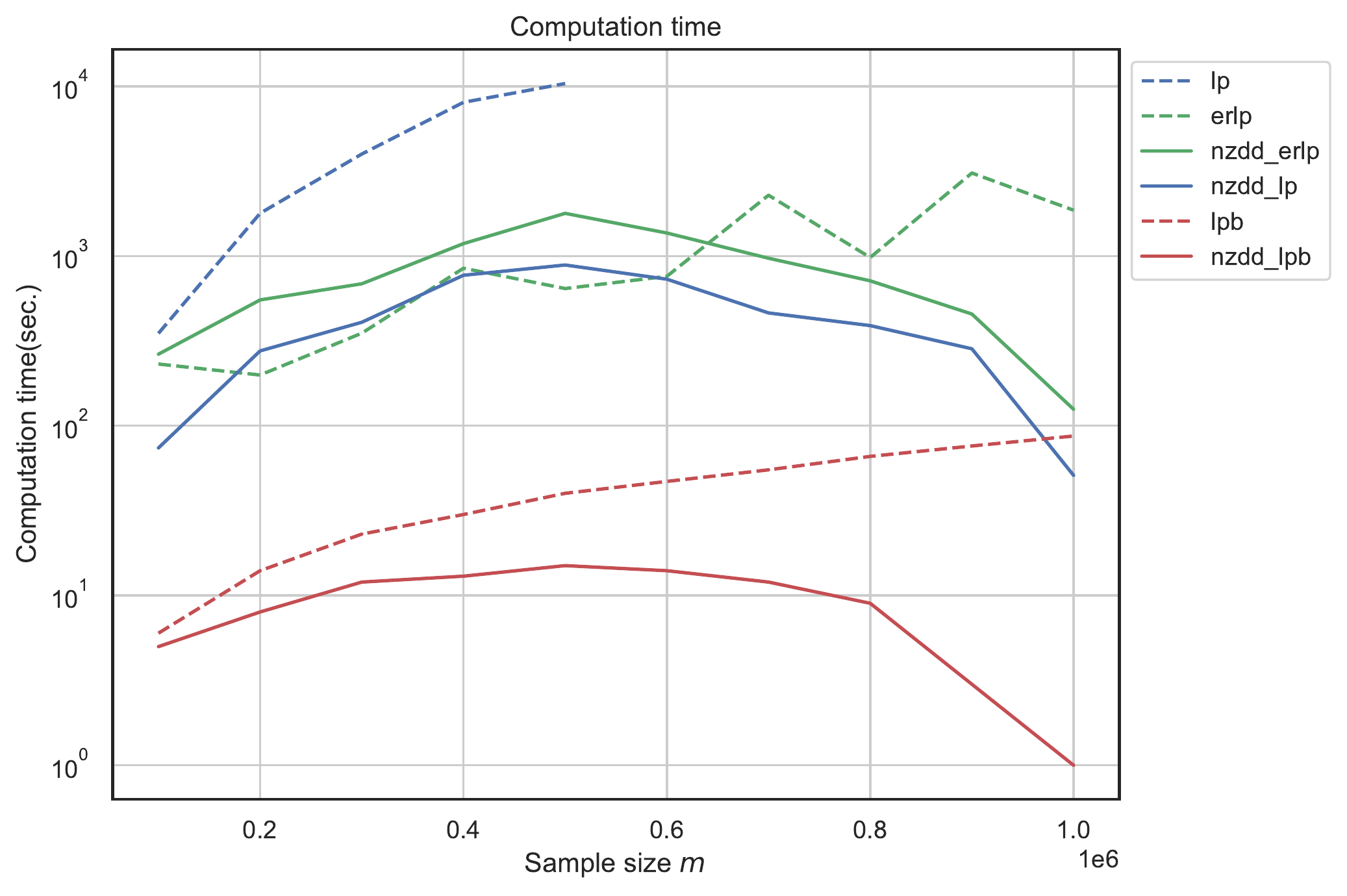}
        \includegraphics[
            width=8cm,height=8cm,keepaspectratio
        ]{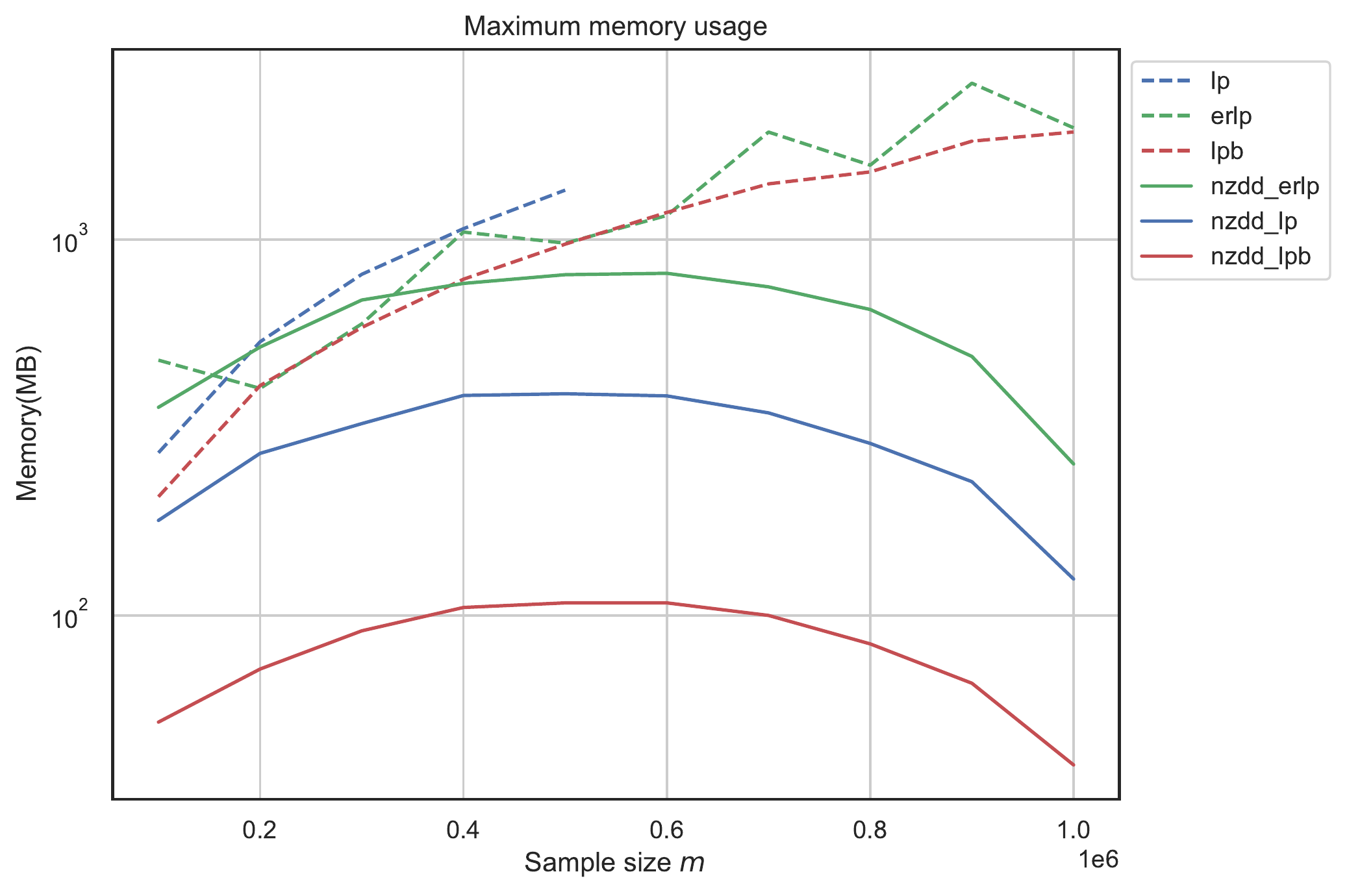}
        \caption{%
            Comparison of computation times %
            for synthetic datasets of the soft margin optimization. %
            The y-axis is shown in the logarithmic scale. %
            Results for \mytt{naive} w.r.t. $m\geq 6 \times 10^5$ are omitted %
            since it takes more than $1$ day $\approx 9\times 10^4$ seconds.%
        }
        \label{fig:softmargin_artificial}
    \end{center}
 \end{figure}

The results are given in Figure~\ref{fig:softmargin_artificial}. 
As expected, our methods are significantly faster than comparators. 
Generally, our methods perform better than the standard counterparts.
In particular, \mytt{nzdd\_lpb} improves efficiency at least $10$ to $100$ times over others. 
Similar results are obtained for maximum memory consumption.

\section{Conclusion}
We proposed a generic algorithm of constructing an NZDD-based extended formulation 
for any given set of linear constraints with integer constraints 
as well as specific algorithms for the $1$-norm soft margin optimization and practical heuristics 
for constructing NZDDs. Our algorithms improve time/space efficiency on artificial and real datasets, 
especially when the datasets have concise NZDD representations.
\section{Acknowledgements}
We thank Mohammad Amin Mansouri for valuable discussions 
and initial development.
This work was supported by JSPS KAKENHI Grant Numbers
JP19H014174, JP19H04067, \\
JP20H05967, and JP22H03649, respectively.

\bibliographystyle{abbrvnat}
\bibliography{hatano,mitsuboshi}

\newpage
\appendix
\section{Details of heuristics for constructing NZDDs}

A pseudo-code of the heuristics is given in Algorithm~\ref{alg:reduce}.
Algorithm~\ref{alg:reduce} consists of two phases. 
In the first phase, it traverses nodes in the topological order 
(from the leaf to the root), and for each node $v$ with one incoming edge $e$, 
it contracts $v$ with its parent node $u$ and 
$u$ inherits the edges $e'$ from $v$. Label sets $\Phi(e)$ and $\Phi(e')$ are also merged. 
The first phase can be implemented in $O(|V|+|E|)$ time, 
by using an adjacency list maintaining children of each node 
and lists of label sets for each edge. 
In the second phase, it does a similar procedure to simplify nodes with single outgoing edges. 
To perform the second phase efficiently, we need to re-organize lists of label sets 
before the second phase starts.
This is because the lists of label sets could form DAGS after the first phase ends, 
which makes performing the second phase inefficient. 
Then, the second phase can be implemented in $O(|V'|+|E'|+\sum_{e\in E'}|\Phi(e)|)$ time.

\begin{algorithm}[t]
\caption{Reducing procedure}
\label{alg:reduce}
Input: NZDD $G=(V,E,\Phi,\Sigma)$
\begin{enumerate}
    \item For each $u\in V$ in a topological order (from leaf to root) and for each child node $v$ of $u$,
    \begin{enumerate}   
        \item If indegree of $v$ is one,  
        \begin{enumerate}
            \item for the incoming edge $e$ from $u$ to $v$, 
            each child node $v'$ of $v$ and each outgoing edge $e'$ from $v$ to $v'$,
             add a new edge $e''$ from node $u$ to $v'$ and 
            set $\Phi(e'')=\Phi(e)\cup \Phi(e')$.
        \end{enumerate}
        \item Remove the incoming edge $e$ and all outgoing edges $e'$.
    \end{enumerate}
    \item For each $v\in V$ in a topological order (from root to leaf) and for each parent node $u$ of $v$,
     \begin{enumerate}   
        \item If outdegree of $u$ is one,  
        \begin{enumerate}
                \item for the outgoing edge $e$ of $u$, 
                each parent node $u'$ of $u$ and each outgoing edge $e'$ from $u'$ to $u$,
                 add a new edge $e''$ from node $u'$ to $v$ and 
                set $\Phi(e'')=\Phi(e)\cup \Phi(e')$.
        \end{enumerate}
        \item Remove the outgoing edge $e$ and all incoming edges $e'$.
    \end{enumerate}
        \item Remove all nodes with no incoming and outgoing edges from $V$ and output the resulting DAG $G'=(V',E')$.
\end{enumerate}    
\end{algorithm}

\section{Details of experiments}
\label{sec:appendix-experiments}
\paragraph{Preprocessing of datasets}
The datasets for the $1$-norm regularized soft margin optimization 
are obtained from the libsvm datasets. 
Some of them contain real-valued features. We convert them to binary ones 
by rounding them using $0.5$ as a threshold.

\paragraph{NZDD construction time and summary of datasets}
Computation times for constructing NZDDs for the MIP synthetic datasets and 
synthetic and real datasets of the $1$norm regularized soft margin optimization
are summarized 
in Table~\ref{tab:nzdd_time_mip}, \ref{tab:nzdd_time_soft_artificial} 
and \ref{tab:nzdd_time_soft_real}, respectively. 
Note that the NZDD construction time is not costly and negligible in general 
because once we construct NZDDs, 
we can re-use those NZDDs for solving optimization problems 
with different objective functions or hyperparameters 
such as $\nu$ in the soft margin optimization.

\begin{table}[ht]
    \centering
    \caption{%
        Computation time (seconds) %
        of NZDD construction for synthetic MIP datasets, %
        described in Table~\ref{tab:nzdd_mip}. %
        All datasets have $n=25$ features %
        and each instance have $k=10$ nonzero components.%
    }
    \label{tab:nzdd_time_mip}
    \begin{tabular}{rrrr}
        \toprule
        $m$ & zcomp & reducing procedure & Total \\
        \midrule
         $4 \times 10^5$ & $0.39$ & $1.02$ & $1.41$ \\
         $8 \times 10^5$ & $0.76$ & $1.38$ & $2.14$ \\
        $12 \times 10^5$ & $1.08$ & $1.41$ & $2.49$ \\
        $16 \times 10^5$ & $1.36$ & $1.10$ & $2.46$ \\
        $20 \times 10^5$ & $1.60$ & $0.33$ & $1.93$ \\
        \bottomrule
    \end{tabular}
\end{table}

\begin{table}[ht]
    \begin{center}
        \caption{%
            Summary of synthetic datasets for MIP. %
            The term ``original'' and ``extended'' mean %
            the original and the extended formulations, respectively.%
        }
        \label{tab:nzdd_mip}
        \begin{tabular}{rrrrrrrr} \toprule
    \multicolumn{2}{c}{Data size} & \multicolumn{2}{c}{NZDD size} & \multicolumn{2}{c}{Variables} & \multicolumn{2}{c}{Constraints}\\
        $n$ &              $m$ &    $|V|$ &  $|E|$    & Original & Extended &    Original & Extended \\ \midrule 
       $25$ & $ 4 \times 10^5$ & $13,321$ & $177,356$ &     $25$ & $13,344$ & $  400,000$ &   $177,356$ \\
       $25$ & $ 8 \times 10^5$ & $17,771$ & $241,488$ &     $25$ & $17,794$ & $  800,000$ &   $241,488$ \\
       $25$ & $12 \times 10^5$ & $19,348$ & $249,749$ &     $25$ & $19,371$ & $1,200,000$ &   $249,749$ \\
       $25$ & $16 \times 10^5$ & $17,819$ & $204,723$ &     $25$ & $17,842$ & $1,600,000$ &   $204,723$ \\
       $25$ & $20 \times 10^5$ & $ 6,161$ & $ 55,555$ &     $25$ & $ 6,184$ & $2,000,000$ &   $ 55,555$ \\ \bottomrule
\end{tabular}

    \end{center}
\end{table}
\begin{table}[ht]
    \begin{center}
        \caption{%
            Computation times (seconds) for NZDD construction %
            for the synthetic datasets, %
            described in Table~\ref{tab:summary_soft_artificial}. %
            all data have $n = 20$ features. %
        }
        \label{tab:nzdd_time_soft_artificial}
        \begin{tabular}{rrrr} \toprule
                    $m$ & zcomp & Reducing procedure & Total \\ \midrule 
              $1 \times 10^5$ & $0.08$ &  $0.14$ & $0.22$ \\
              $2 \times 10^5$ & $0.15$ &  $0.14$ & $0.29$ \\
              $3 \times 10^5$ & $0.25$ &  $0.12$ & $0.37$ \\
              $4 \times 10^5$ & $0.31$ &  $0.08$ & $0.39$ \\
              $5 \times 10^5$ & $0.39$ &  $0.05$ & $0.44$ \\
              $6 \times 10^5$ & $0.46$ &  $0.05$ & $0.51$ \\
              $7 \times 10^5$ & $0.53$ &  $0.03$ & $0.56$ \\
              $8 \times 10^5$ & $0.60$ &  $0.02$ & $0.62$ \\
              $9 \times 10^5$ & $0.68$ &  $0.01$ & $0.69$ \\
             $10 \times 10^5$ & $0.72$ &  $0.00$ & $0.72$ \\\bottomrule
        \end{tabular}
    \end{center}
\end{table}

\begin{table}[ht]
    \begin{center}
        \caption{%
            Computation times (seconds) for constructing %
            NZDDs for real datasets of the soft margin optimization.%
        }
        \label{tab:nzdd_time_soft_real}
        \begin{tabular}{rrrr} \toprule
            dataset    & zcomp & Reducing procedure & Total \\ \midrule 
            a9a        &    $0.04$ &    $0.20$ & $0.24$  \\
            art-100000 &    $0.10$ &    $0.43$ & $0.53$  \\
            real-sim   &    $0.24$ &    $0.99$ & $1.23$  \\
            w8a        &    $0.27$ &    $4.20$ & $4.47$  \\ 
            HIGGS      &   $13.13$ &    $0.00$ & $13.13$ \\
            \bottomrule
        \end{tabular}
    \end{center}
\end{table}

Table~\ref{tab:summary_soft} summarizes the size of each problem
for soft margin optimization. 
As this table shows,
the extended formulations (\ref{prob:zdd_softmargin_primal2}) 
have fewer variables and constraints. 
This is not surprising 
since the extended formulation has %
$O(n+|V|+|E|)$ variables and $O(|E|)$ constraints,  
while the original formulation (\ref{prob:softmargin_primal}) has %
$O(n+m)$ variables and $O(m)$ constraints.

\begin{table}[ht]
    \begin{center}
        \caption{%
            Summary of synthetic datasets for the soft margin optimization. %
            The term ``original'' and ``extended'' mean %
            the original and the extended formulations, respectively.
        }
        \label{tab:summary_soft_artificial}
        \begin{tabular}{rrrrrrrr} \toprule
    \multicolumn{2}{c}{Data size} & \multicolumn{2}{c}{NZDD size} & \multicolumn{2}{c}{Variables} & \multicolumn{2}{c}{Constraints}\\
     $n$ &        $m$ &     $|V|$ &     $|E|$ &    Original &   Extended &    Original &  Extended \\ \midrule 
    $20$ & $ 1 \times 10^5$ &  $ 4,202$ & $ 55,147$ &  $ 100,021$ &  $ 59,370$ &  $ 200,001$ & $110,297$ \\
    $20$ & $ 2 \times 10^5$ &  $ 6,663$ & $ 82,810$ &  $ 200,021$ &  $ 89,494$ &  $ 400,001$ & $165,623$ \\
    $20$ & $ 3 \times 10^5$ &  $11,022$ & $103,323$ &  $ 300,021$ &  $114,366$ &  $ 600,001$ & $206,649$ \\
    $20$ & $ 4 \times 10^5$ &  $13,596$ & $120,994$ &  $ 400,021$ &  $134,611$ &  $ 800,001$ & $241,991$ \\
    $20$ & $ 5 \times 10^5$ &  $12,565$ & $128,670$ &  $ 500,021$ &  $141,256$ & $1,000,001$ & $257,343$ \\
    $20$ & $ 6 \times 10^5$ &  $13,796$ & $127,139$ &  $ 600,021$ &  $140,956$ & $1,200,001$ & $254,281$ \\
    $20$ & $ 7 \times 10^5$ &  $13,513$ & $114,471$ &  $ 700,021$ &  $128,005$ & $1,400,001$ & $228,945$ \\
    $20$ & $ 8 \times 10^5$ &  $ 9,569$ & $ 95,454$ &  $ 800,021$ &  $105,044$ & $1,600,001$ & $190,911$ \\
    $20$ & $ 9 \times 10^5$ &  $ 6,725$ & $ 75,318$ &  $ 900,021$ &  $ 82,064$ & $1,800,001$ & $150,639$ \\
    $20$ & $10 \times 10^5$ &  $ 3,914$ & $ 40,226$ & $1,000,021$ &  $ 44,161$ & $2,000,001$ & $ 80,455$ \\ \bottomrule
\end{tabular}

    \end{center}
\end{table}

\begin{table}[ht]
    \begin{center}
        \caption{
            Summary of sizes of the real datasets %
            of the soft margin optimization. %
            The term ``original'' and ``extended'' mean %
            the original and the extended formulations %
            (\ref{prob:softmargin_primal}) %
            and (\ref{prob:zdd_softmargin_primal2}), respectively.%
        }
         \label{tab:summary_soft}
         \begin{tabular}{rrrrrrrrr}
    \toprule
    dataset    &  \multicolumn{2}{c}{Data size} & \multicolumn{2}{c}{NZDD size} &  \multicolumn{2}{c}{Variables} & \multicolumn{2}{c}{Constraints}\\
               &      $n$ &          $m$ &    $|V|$ &    $|E|$ &      Original &    Extended  & Original     &  Extended  \\
               \midrule 
    a9a        &    $123$ &     $32,561$ &    $775$ &  $20,657$ &     $32,685$ &     $21,556$ &     $65,123$ &  $41,317$  \\
    art-100000 &     $20$ &    $100,000$ &  $4,202$ &  $55,163$ &    $100,021$ &     $59,386$ &    $200,001$ & $110,329$  \\
    real-sim   & $20,955$ &     $72,309$ &     $38$ &   $7,922$ &     $93,265$ &     $28,916$ &    $144,619$ &  $15,847$  \\
    w8a        &    $300$ &     $49,749$ &    $209$ &  $34,066$ &     $50,050$ &     $34,576$ &     $99,499$ &  $68,135$  \\ 
    HIGGS      &     $28$ & $11,000,000$ &    $151$ &     $989$ & $11,000,029$ &      $1,169$ & $22,000,001$ &   $1,981$  \\
    \bottomrule
\end{tabular}

    \end{center}
\end{table}

\end{document}